\definecolor{grey}{rgb}{0.75,0.75,0.75}
\definecolor{orange}{rgb}{1.0,0.5,0.5}
\definecolor{brown}{rgb}{0.5,0.25,0.0}
\definecolor{pink}{rgb}{1.0,0.5,0.5}
\newcommand{\bigN}{\mathbb{N}}
\newcommand{\bigX}{\mathbb{X}}
\newcommand{\bigY}{\mathbb{Y}}
\newcommand{\bigR}{\mathbb{R}}
\newcommand{\bigP}{\mathbb{P}}
\newcommand{\bigZ}{\mathbb{Z}}
\newcommand{\bigKT}{\mathbb{KT}}
\newcommand{\argmax}{\operatornamewithlimits{argmax}} 
\newcommand{\argmin}{\operatornamewithlimits{argmin}} 
\newcommand{\one}{\mathbf{1}}
\newtheorem{theo}{Theorem} 
\newtheorem{lem}{Lemma} 
\newtheorem{pr}{Proposition}
\newtheorem{defi}{Definition}
\newcommand{\CE}[4][]
{
\ifthenelse{\equal{#1}{}}{\mathbb{E}^{#2}_{#3}\left[#4\right]}{\mathbb{E}^{#2}_{#3}\left[#4\middle | #1\right]}
}
\def\Xset{\mathbb{X}}
\newcommand{\XinitIS}[2][]{\ifthenelse{\equal{#1}{}}{\ensuremath{\rho_{#2}}}{\ensuremath{\check{\rho}_{#2}}}}
\newcommand{\filt}[2][]%
{
\ifthenelse{\equal{#1}{}}{\ensuremath{\phi_{#2}}}%
{\ifthenelse{\equal{#1}{hat}}{\ensuremath{\phi^{N}_{#2}}}
{\ifthenelse{\equal{#1}{tilde}}{\ensuremath{\tilde{\phi}^{N}_{#2}}}
{\ifthenelse{\equal{#1}{tar}}{\ensuremath{\phi^{N,\mathrm{t}}_{#2}}}
{\ifthenelse{\equal{#1}{aux}}{\ensuremath{\phi^{N,\mathrm{a}}_{#2}}}
}
}
}
}
}
\newcommand{\mcbf}[2][]{%
\ifthenelse{\equal{#1}{}}{\overline{\mathcal{F}}_{#2}}{\overline{\mathcal{F}}_{#2}^{(#1)}}%
}
\newcommand{\unfilt}[2][]%
{
\ifthenelse{\equal{#1}{}}{\ensuremath{\gamma_{#2}}}%
{\ifthenelse{\equal{#1}{hat}}{\ensuremath{\gamma^{N}_{#2}}}
{\ifthenelse{\equal{#1}{tilde}}{\ensuremath{\tilde{\gamma}^{N}_{#2}}}
{\ifthenelse{\equal{#1}{tar}}{\ensuremath{\gamma^{N,\mathrm{t}}_{#2}}}
{\ifthenelse{\equal{#1}{aux}}{\ensuremath{\gamma^{N,\mathrm{a}}_{#2}}}
}
}
}
}
}
\newcommand{\CPE}[3][]
{\ifthenelse{\equal{#1}{}}{\mathbb{E}\left[\left. #2 \, \right| #3 \right]}{\mathbb{E}_{#1}\left[\left. #2 \, \right| #3 \right]}}
\newcommand{\Xsigma}[1][]%
{%
\ifthenelse{\equal{#1}{}}{\ensuremath{\mathcal{B}(\Xset)}}{\ensuremath{\mathcal{B}(\Xset^{#1})}}
}
\newcommand{\sumwght}[2][]{%
\ifthenelse{\equal{#1}{}}{\ensuremath{\Omega_{#2}}}{\ensuremath{\Omega_{#2}^{(#1)}}}}
\newcommand{\adjfunc}[4][]
{\ifthenelse{\equal{#1}{}}{\ifthenelse{\equal{#4}{}}{\vartheta_{#2}}{\vartheta_{#2}(#4)}}
{\ifthenelse{\equal{#1}{smooth}}{\ifthenelse{\equal{#4}{}}{\tilde{\vartheta}_{#2}}{\tilde{\vartheta}_{#2}(#4)}}
{\ifthenelse{\equal{#1}{fully}}{\ifthenelse{\equal{#4}{}}{\vartheta^\star_{#2}}{\vartheta^\star_{#2}(#4)}}{\mathrm{erreur}}}}}
\newcommand{\chunk}[4][]%
{\ifthenelse{\equal{#1}{}}{\ensuremath{{#2}_{#3:#4}}}{\ensuremath{#2^#1}_{#3:#4}}
}
\newcommand{\kiss}[3][]
{\ifthenelse{\equal{#1}{}}{p_{#2}}
{\ifthenelse{\equal{#1}{fully}}{p^{\star}_{#2}}
{\ifthenelse{\equal{#1}{smooth}}{\tilde{r}_{#2}}{\mathrm{erreur}}}}}
\newcommand{\Kiss}[3][]
{\ifthenelse{\equal{#1}{}}{P_{#2}}
{\ifthenelse{\equal{#1}{fully}}{P^{\star}_{#2}}
{\ifthenelse{\equal{#1}{smooth}}{\tilde{R}_{#2}}{\mathrm{erreur}}}}}
\newcommand{\post}[3][]%
{
\ifthenelse{\equal{#1}{}}{\ensuremath{\phi_{#2|#3}}}%
{\ifthenelse{\equal{#1}{hat}}{\ensuremath{\phi^{\mathrm{FFBS},N}_{#2|#3}}}
{\ifthenelse{\equal{#1}{tilde}}{\ensuremath{\phi^{\mathrm{FFBSi},N}_{#2|#3}}}
{\ifthenelse{\equal{#1}{tar}}{\ensuremath{\phi^{N,\mathrm{t}}_{#2|#3}}}}
}
}
}
\newcounter{hypA}
\begin{document}

\title{Context Tree Estimation in Variable Length Hidden Markov Models}

\author{Thierry~Dumont\\ Universit\'e de Paris sud XI\\E-mail: thierry.dumont@math.u-psud.fr
\thanks{This work was supported by ID Services, 22/24 rue Jean Rostand 91400 Orsay, FRANCE}}

\maketitle

\begin{abstract}
We address the issue of context tree estimation in variable length hidden Markov models. We propose an estimator of the context tree of the hidden Markov process which needs no prior upper bound on the depth of the context tree. We prove that the estimator is strongly consistent. This uses information-theoretic mixture inequalities in the spirit of \cite{Fin90,GB03}. We propose an algorithm to efficiently compute the estimator and provide simulation studies to support our result.
\end{abstract}

\begin{IEEEkeywords}
Variable length, hidden Markov models, context tree, consistent estimator, mixture inequalities.
\end{IEEEkeywords}

\IEEEpeerreviewmaketitle

\section{Introduction}

A variable length hidden Markov model (VLHMM) is a bivariate stochastic process $(X_{n},Y_{n})_{n \geq 0}$ where $(X_{n})_{n \geq 0}$ (the state sequence) is a variable length Markov chain (VLMC) in a state space $\bigX$ and, conditionally on $(X_{n})_{n \geq 0}$, $(Y_{n})_{n \geq 0}$ is a sequence of independent variables in a state space $\bigY$ such that the conditional distribution of $Y_{n}$ given the state sequence (called the emission distribution) depends on $X_{n}$ only. Such processes fall into the general framework of latent variable processes, and reduce to hidden Markov models (HMM) in case the state sequence is a Markov chain. Latent variable processes are used as a flexible tool to model dependent non-Markovian time series, and the statistical problem is to estimate the  parameters of the distribution when only $(Y_{n})_{n \geq 0}$ is observed. We will consider in this paper the case where the hidden process may take only a fixed and known number of values, that is the case where the state space $\bigX$ is finite with known cardinality $k$. 

The dependence structure of a latent variable process  is driven by that of the hidden process $(X_{n})_{n \geq 0}$, which is assumed here to be a variable length Markov chain (VLMC). Such processes were first introduced by Rissanen in \cite{Ris83} as a flexible and parsimonious modelization tool for data compression, approximating  Markov chains of finite orders. Recall that a Markov process of order $d$  is  such that the conditional distribution of $X_{n}$ given all past values depends only on the $d$ previous ones $X_{n-1},\ldots,X_{n-d}$. 
But different past values may lead to identical conditional distributions, so that all $k^{d}$ possible past values are not needed to describe the distribution of the process. A VLMC is such that the probability of the present state depends only on a finite part of the past, and the length of this relevant portion, called context, is a function of the past itself.
No context may be a proper postfix of any other context, so that the set of all contexts may be represented as a rooted labelled tree. This set is called the context tree of the VLMC. 

Variable length hidden Markov models appear for the first time, to our knowledge, in movement analysis \cite{WangZhou}, \cite{Wang}. Human movement analysis is the interpretation of movements as sequences of poses. \cite{Wang} analyses the movement through 3D rotations of 19 major joints of human body. Wang and al. then use a VLHMM representation where $X_{n}$ is the pose at time $n$ and $Y_{n}$ is the body position given by the 3D rotations of the 19 major points. They argue that "VLHMM is superior in its efficiency and accuracy of modeling multivariate time-series data with highly-varied dynamics". 

VLHMM could also be used in WIFI based indoor positioning systems (see \cite{evennou}). Here $X_{n}$ is a mobile device position at time $n$ and $Y_{n}$ is the received signal strength (RSS) vector at time $n$. Each component of the RSS vector represents the strength of a signal sent by a WIFI access point. In practice, the aim is to estimate the positions of the device $(X_{n})_{n\ge 0}$ on the basis of the observations $(Y_{n})_{n\ge 0}$. The distribution of $Y_{n}$ given $X_{n} = x$ for any location $x$ is beforehand calibrated for a finite number of locations $(L_{1},...,L_{k})$. A Markov chain on the finite set $(L_{1},...,L_{k})$ is then used to model the sequence of positions $(X_{n})_{n \ge 0}$. Again VLHMM model would lead to efficient and accurate estimation of the device position.

The aim of this paper is to provide a statistical analysis of variable length hidden Markov models and, in particular, to propose a consistent estimator of the context tree of the hidden VLMC
on the basis of the observations $(Y_{n})_{n \geq 0}$ only. 
We consider a parametrized family of VLHMM, and we use a penalized likelihood method
to estimate the context tree of the hidden VLMC. To each possible context tree $\tau$, if $\Theta_{\tau}$ is the set of possible parameters, we define
$$\hat{\tau}_{n} = \argmin\limits_{\tau} \left\{ -\sup\limits_{\theta \in \Theta_{\tau}} \log g_{\theta}(Y_{1:n}) + pen(n,\tau) \right\}$$
where $g_{\theta}(y_{1:n})$ is the density of the distribution of the observation $Y_{1:n}=(Y_{1},\ldots,Y_{n})$ under the parameter $\theta$ with respect to some dominating positive measure, and $pen(n,\tau)$ is a penalty that depends on the number $n$ of observations and the context tree $\tau$. 
Our aim is to find penalties for which the estimator is strongly consistent without any prior  upper bound on the depth of the context tree, and to provide a practical algorithm to compute the estimator.

Context tree estimation for a VLHMM is similar to  order estimation for a  HMM in which the order is defined as the unknown cardinality of the state space $\bigX$. The main difficulty lies in 
the calibration of the penalty, which requires some understanding of the growth of the likelihood ratios (with respect to orders and to the number of observations). In particular cases, the fluctuations of the likelihood ratios may be understood via empirical process theory, see the recent works \cite{Han09} for finite state Markov chains and \cite{HanGas} for independent identically distributed observations. Latent variable models are much more complicated, see for instance \cite{EGCK00} where it is proved in the HMM situation that the likelihood ratio statistics converges to infinity for overestimated order. We thus use an approach based on information theory tools to understand the behavior of likelihood ratios. Such tools have been successfull for HMM order estimation problems and were used in \cite{GB03}, \cite{Fin90} for discrete observations and in \cite{CGG09}  for Poisson emission distributions or Gaussian emission distributions with known variance. Our main result shows that for a penalty of form $C(\tau)\log n$, $\hat{\tau}_{n} $ is strongly consistent, that is converges almost surely to the true unknown context tree. Here, $C(\tau)$ has an explicit formulation but is slightly bigger than $(k-1)|\tau|/2$ which gives the popular BIC penalty. 
We study the important situation of Gaussian emissions with unknown variance, and prove that our consistency theorem holds in this case. 

Computation of the estimator requires computation of the maximum likelihood for all possible context trees. As usual, the EM algorithm may be used to compute the maximum likelihood estimator for the parameters when the context tree is fixed. We then propose an algorithm to compute the estimator, which prevents the exploration of a too large number of context trees. In general the EM algorithm needs to be run several times with different initial values to avoid local extrema traps. In the important situation of Gaussian emissions, we propose a way to choose the initial parameters so that only one run of the EM algorithm is needed. Simulations compare penalized maximum likelihood estimators of the context tree $\tau$ of the hidden VLMC using our penalty and using BIC penalty.

The structure of this paper is the following. Section \ref{sec:not} describes the model and gives the notations.
Section \ref{sec:theo} presents the information theory tools we use, states the main consistency result and applies it to Poisson emission distributions and Gaussian emission distributions with known variance. Section \ref{sec:gauss} proves the result for Gaussian emission distributions with unknown variance. 
In section \ref{sec:algo}, we describe the algorithm to compute the estimator and we give the simulation results. The proofs that are not essential at first reading are detailed in the Appendix.

\section{Basic setting and notation}
\label{sec:not}
Let $\bigX$ be a finite set whose cardinality is denoted by $\left| \bigX \right|=k$, that we identify with $\{1,\ldots,k\}$. Let ${\cal{F}}_{\bigX}$ be the finite collection of subsets of $\bigX$. Let  $\bigY$ be a Polish space endowed with its Borel sigma-field ${\cal{F}}_{\bigY}$. We will work on the measurable space $(\Omega,\cal{F})$ with $\Omega=(\bigX \times \bigY)^{\bigN}$ and $\cal{F}=({\cal{F}}_{\bigX}\otimes {\cal{F}}_{\bigY})^{\otimes \bigN}$
. 
\subsection{Context trees and variable length Markov chains}
A string $s=x_{k} x_{k+1} ...x_{l} \ \in \bigX^{l-k+1} $ is denoted by $x_{k:l}$ and its length is then $l(s)=l-k+1$. We call letters of $s$ its components $x_{i}, \ i=k,\ldots,l$. The concatenation of the strings $u$ and $v$ is denoted by $uv$. 
A string $v$ is a \textit{postfix} of a string $s$ if there exists a string $u$ such that $s=uv$.\\
A set $\tau$ of strings and possibly semi-infinite sequences is called a \textit{tree} if the following \textit{tree property} holds : no $s\in \tau$ is postfix of any other $s' \in \tau$.
A tree $\tau$ is \textit{irreducible} if no element $s \in \tau$ can be replaced by a postfix without violating the tree property. It is \textit{complete} if each node except the leaves has $\left|\bigX\right|$ children exactly. We denote by $d(\tau)$ the depth of $\tau$ : $d(\tau) = \max \left\{ l(s) \ \big| \ s\in\tau \right\}$.

Let now $Q$ be the distribution of an ergodic stationary process $(X_{n})_{n\in\bigZ}$ on $(\bigX^{\bigZ},{\cal{F}}_{\bigX}^{\otimes \bigZ})$, and for any $m\leq n$ and any $x_{m:n}$ in $\bigX^{n-m+1}$, write $Q(x_{m:n})$ for $Q(X_{0:n-m}=x_{m:n})$.

\begin{defi}\label{defi VLMC}  Let $\tau$ be a tree. $\tau$ is called a $Q$-adapted context tree if for any string $s$ in $\tau$ such that $Q(s)>0$:
\begin{equation}
\label{eqQadapt}
 \forall x_{0}\in \bigX ,  Q(X_{0} = x_{0}  \big|   X_{-\infty:-1} = x_{-\infty:-1})=  Q(X_{0} = x_{0}  \big| X_{-l(s):-1} = s)
\end{equation} whenever $s$ is postfix of the semi infinite sequence  $x_{-\infty:-1}$. Moreover, if for any $s \in \tau$, $Q(s)>0$ and no proper postfix of $s$ has the property $(\ref{eqQadapt})$, then $\tau$ is called the minimal context tree of the distribution $Q$, and $(X_{n})_{n\in\bigZ}$ is called a variable length Markov chain (VLMC).
\end{defi}  

If a tree $\tau$ is $Q$-adapted, then for all sequences $x_{-\infty:-1}$ such that  for any $M \ge 1$, $Q(x_{-M:-1})>0$, there exists a unique string in $\tau$ which is postfix of $x_{-\infty:-1}$. We denote this postfix by $\tau(x_{-\infty:-1})$.\\
%
A tree $\tau$ is said to be a \textit{subtree} of $\tau'$ if for each string $s'$ in $\tau'$ there exists a string $s$ in $\tau$ which is postfix of $s'$. Then if $\tau$ is a $Q$-adapted tree, any tree $\tau'$ such that $\tau$ is a subtree of $\tau'$ will be $Q$-adapted.

\begin{defi}\label{defi minimal complete context tree}
Let $Q$ be the distribution of a VLMC $(X_{n})_{n\in\bigZ}$. Let $\tau_{0}$ be its minimal context tree. 
There exists a unique complete tree $\tau^{\star}$ such that
$\tau_{0}$ is a subtree of $\tau^{\star}$
and
\begin{equation*}
|\tau^{\star}| = \min \big\{|\tau|\;:\;\ \tau \text{ is a complete tree and }  \tau_{0}  \text{ is a subtree  of } \tau \big\}.
\end{equation*}
$\tau^{\star}$ is called the minimal complete context tree of the distribution $Q$ of the VLMC $(X_{n})_{n\in\bigZ}$.
\end{defi}

Let us define, for any complete tree $\tau$, the set of transition parameters:
\begin{equation*}
\Theta_{t,\tau} = \bigg\{\left( P_{s,i} \right)_{s \in \tau, i\in\bigX} \;:\; \forall s \in \tau, \forall i \in \bigX, \ P_{s,i} \ge 0 \ \text{and} \ \sum\limits_{i=1}^{k} P_{s,i} = 1     \bigg\}.
\end{equation*}
If $(X_{n})_{n\in\bigZ}$ is a VLMC with  minimal complete context tree $\tau^{\star}$ and transition parameters $\theta_{t}^{\star}=\left( P^{\star}_{s,i} \right)_{s \in \tau^{\star},  i\in\bigX}\in\Theta_{t,\tau^{\star}}$, 
for any complete tree $\tau$ such that $\tau^{\star}$ is a subtree of $ \tau$, there exists a unique $\theta_{t}=\left( P_{s,i} \right)_{s \in \tau,  i\in\bigX}\in\Theta_{t,\tau}$ that defines the same VLMC transition probabilities, namely: for any $s\in\tau$, there exists a unique $u\in\tau^{\star}$ which is a postfix of $s$, and for all $i\in\bigX$, $P_{s,i}= P^{\star}_{u,i}$. 
Of course, a parameter in $\Theta_{t,\tau}$ might  be not sufficient to define a unique distribution of a VLMC (if there is no unique stationary distribution). But the parameter defines a unique distribution of VLMC if, for instance, the Markov chain $([X_{n-d(\tau)+1},\ldots,X_{n}])_{n\in\bigZ}$ it defines is irreducible.

\subsection{Variable length hidden Markov models}

A variable length hidden Markov model (VLHMM) is a bivariate stochastic process $(X_{n},Y_{n})_{n \geq 0}$ where $(X_{n})_{n \geq 0}$
(the state sequence) is a (non observed) stochastic process which is the restriction to non negative indices of a VLMC $(X_{n})_{n \in \bigZ}$ with values in  $\bigX$ and, conditionally on $(X_{n})_{n\geq 0}$, $(Y_{n})_{n \geq 0}$ is a sequence of independent variables in the state space $\bigY$ such that for any integer $n$, the conditional distribution of $Y_{n}$ given the state sequence (called the emission distribution) depends on $X_{n}$ only. 

We assume that the emission distributions are absolutely continuous with respect to some positive measure $\mu$ on $(\bigY,{\cal{F}}_{\bigY})$  and are parametrized by a set of parameters $\Theta_{e} \subset (\bigR^{d_{e}})^{k}\times \bigR^{m_{e}}$, so that the set of emission densities (the possible densities of the distribution of $Y_n$ conditional to $X_{n}=x$) is 
$\{(g_{\theta_{e,x},\eta}(.))_{x\in\bigX}, \ \theta_{e}=(\theta_{e,1},\ldots,\theta_{e,k},\eta)\in \Theta_{e}\}$.
For any  complete tree $\tau$, we define now the parameter set :
$$
\Theta_{\tau} = \Theta_{t,\tau} \times \Theta_{e},
$$
and define, for $\theta = (\theta_{t},\theta_{e})\in \Theta_{\tau}$, $\bigP_{\theta}$ the probability of the VLHMM  $(X_{n},Y_{n})_{n\geq 0}$ such that $(X_{n})_{n \in\bigZ}$ is the VLMC with complete context tree $\tau$, transition parameter $\theta_{t}$, and
for any $(u_{1},u_{2})\in \bigN^{2}$, $u_{1}\le u_{2} $, any sets $A_{u_{1}},\ldots,A_{u_{2}}$ in $\mathcal{F}_{Y}$, any $x_{u_{1}:u_{2}}\in\bigX^{u_{2}-u_{1}+1}$,
\begin{equation*}
\bigP_{\theta}\Bigg(Y_{u_{1}} \in A_{u_{1}}, \ldots ,Y_{u_{2}} \in A_{u_{2}} \bigg\vert X_{u_{1}}=x_{u_{1}}, \ldots ,X_{u_{2}}=x_{u_{2}} \Bigg)= \prod\limits_{u=u_{1}}^{u_{2}}  \left[\int_{A_{u}}g_{\theta_{e,x_{u}},\eta}(y)d\mu(y)\right]. 
\end{equation*}
Of course, as noted before, it can happen that $\theta_{t}$ does not define a unique VLHMM. We shall however do not consider this question since we shall assume that the true parameter defines an irreducible hidden VLMC, and we shall introduce initial distributions to define a computable likelihood: 
throughout the paper we shall assume that the observations $(Y_{1},...,Y_{n})= Y_{1:n}$ come from a VLHMM with parameter $\theta^{\star}$ such that $\tau^{\star}$ is the minimal \textit{complete} context tree of the hidden VLMC, and such that $([X_{n-d(\tau^{\star})+1},\ldots,X_{n}])_{n\in\bigZ}$ is a stationary and irreducible Markov chain. 
And
to define a computable  likelihood, we introduce, for any positive integer $d$, a probability distribution $\nu_{d}$ on $\bigX^{d}$ so that, for any complete tree $\tau$ and any $\theta = (\theta_{t},\theta_{e})\in \Theta_{\tau}$, we set what will be called the likelihood: 
\begin{equation}
\label{eq:lik}
\forall y_{1:n}\in\bigY^{n},  \ g_{\theta}(y_{1:n}) =   \sum\limits_{x_{1:n} \in \bigX^{n}} \left[ \prod\limits_{i=1}^{n} g_{\theta_{e,x_{i}},\eta}(y_{i})\right] g_{\theta_{t}} (x_{1:n})
\end{equation}
where, if $\theta_{t}=\left( P_{s,x} \right)_{s \in \tau, x\in\bigX}$:
\begin{equation}
  \label{vrais}
 g_{\theta_{t}} (x_{1:n})
= \sum\limits_{x_{-d(\tau)+1:0} \in \bigX^{d(\tau)}} \Big[\nu_{d(\tau)}(x_{-d(\tau)+1:0}) \\ 
\prod\limits_{i = 1}^{n} P_{\tau(x_{-d(\tau)+i:i-1}),x_{i}} \Big].
  \end{equation} 
 
 We are concerned with the statistical estimation of the tree $\tau^{\star}$ using a method that involves no prior upper  bound on the depth of $\tau^{\star}$.   
 Define the following estimator of the minimal complete context tree $\tau^{\star}$ : 
   \begin{equation}
    \label{tau} 
   \hat{\tau}_{n} = \argmin\limits_{\tau \ \text{complete  tree}} \Big[ - \sup\limits_{\theta \in \Theta_{\tau}} \log g_{\theta}(Y_{1:n})+ pen(n,\tau)  \Big] 
 \end{equation}  
   where $ pen(n,\tau)$ is a penalty term depending on the number of observations $n$ and the complete tree $\tau$.\\
The label switching phenomenon occurs in statistical inference of VLHMM as it occurs in statistical inference of HMM and of population mixtures. That is:  applying a label permutation on $\bigX$ does not change the distribution of $(Y_{n})_{n\geq 0}$. Thus, if $\sigma$ is a permutation of $\left\{ 1,...,k \right\}$ and $\tau$ is a complete tree, we define the complete tree $\sigma(\tau)$ by 
\[\sigma(\tau) = \left\{ \sigma(x_{1})...\sigma(x_{l}) \big| \ x_{1:l} \in \tau\right\}.
\]  
\begin{defi}\label{defi equivalence tree}
If $\tau$ and $\tau'$ are two complete trees, we say that $\tau$ and $\tau'$ are equivalent, and denote it by $\tau \sim \tau'$, if there exists a permutation $\sigma$ of $\bigX$ such that $\sigma(\tau)= \tau'$.
\end{defi}
We then choose $pen(n,\tau)$ to be invariant by permutation, that is: for any permutation $\sigma$ of $\bigX$, 
$pen(n,\sigma(\tau))=pen(n,\tau)$.
In this case, for any complete tree $\tau$,
 \begin{equation*}
- \sup\limits_{\theta \in \Theta_{\hat{\tau}_{n}}}\log g_{\theta}(Y_{1:n}) + pen(n,\tau)  = \\
- \sup\limits_{\theta \in \Theta_{\sigma(\hat{\tau}_{n})}}\log g_{\theta}(Y_{1:n}) + pen(n,\sigma(\tau))
 \end{equation*}
so that the definition of $\hat{\tau}_{n}$ requires a choice in the set of minimizers of (\ref{tau}).

Our aim is now to find penalties allowing to prove the strong consistency of $\hat{\tau}_{n}$, that is such that $\hat{\tau}_{n}\sim\tau^{\star}$, $\bigP_{\theta^{\star}}$- eventually almost surely as $n \to \infty$.

\section{The general strong consistency theorem}
\label{sec:theo}
In this section, we first recall the tools borrowed from information theory, and set the result that we use in order to find a penalty insuring the strong consistency of $\hat{\tau}_{n}$. 
Then we give our general strong consistency theorem, and straightforward applications. Application to Gaussian emissions with unknown variance, which is more involved, is deferred to the next section.

\subsection{An information theoretic inequality}

We shall introduce mixture probability distributions on $\bigY^{n}$ and compare them to the maximum likelihood, in the same way as  \cite{KT81} first did; see also \cite{Cato} and \cite{EG11} for tutorials and use of such ideas in statistical methods.
For any complete tree $\tau$, we define, for all positive integer $n$, the mixture measure $\bigKT_{\tau}^{n}$ on $\bigY^{n}$ using a prior $\pi^{n}$ on $\Theta_{\tau}$: 
$$ \pi^{n}(d\theta) = \pi_{t}(d\theta_{t})\otimes\pi_{e}^{n}(d\theta_{e})
$$
where $\pi_{e}^{n}$ is a prior on $\Theta_{e}$ that may change with $n$, and $\pi_{t}$ the prior on $\Theta_{t}$ such that, if $\theta_{t}=(P_{s,i})_{s \in \tau,i \in \bigX}$,
$$
\pi_{t}(d\theta_{t}) = \otimes_{s \in \tau} \pi_{s}(d(P_{s,i})_{i \in \bigX}) 
$$
where  $(\pi_{s})_{s\in \tau}$ are Dirichlet $\mathcal{D}(\dfrac{1}{2},...,\dfrac{1}{2})$ distributions on $\left[ 0,1\right]^{|\bigX|}$.
Then $\bigKT_{\tau}^{n}$ is defined on $\bigY^{n}$ by 
$$
\bigKT_{\tau}^{n}(y_{1:n})= \sum\limits_{x_{1:n} \in \bigX^{n}} \bigKT_{\tau,t}(x_{1:n})\bigKT_{e}^{n}(y_{1:n}|x_{1:n}) 
$$
where
$$
 \bigKT_{e}^{n}(y_{1:n}|x_{1:n})=\int\limits_{  \Theta_{e}} \left[ \prod_{i=1}^{n}g_{\theta_{e,x_{i}},\eta}(y_{i})\right] \pi_{e}^{n} (d \theta_{e} )
$$
and
\begin{equation*}
\bigKT_{\tau,t}(x_{1:n})  = 
\left(\frac{1}{k}\right)^{d(\tau)}\int_{\Theta_{t}}\bigP_{\theta_{t}} \left(x_{d(\tau)+1:n} \vert x_{1:d(\tau)} \right)\pi_{t} (d \theta_{t} )
= 
\left(\frac{1}{k}\right)^{d(\tau)}
  \prod\limits_{s\in \tau} \int\limits_{[0,1]^{\left|\bigX \right|} } \prod\limits_{i = 1}^{k} P_{s,i}^{a_{s}^{x}(x_{1:n})} \pi_{s}(d(P_{s,i})_{i \in \bigX})
\end{equation*}
where  $a_{s}^{x}(x_{1:n})$ is the number of times that $x$ appears in context $s$, that is $a_{s}^{x}(x_{1:n})=\sum_{i=d(\tau)+1}^{n}\one_{x_{i}=x, x_{i-l(s),i-1}=s}$.

The following inequality will be a key tool to control the fluctuations of the likelihood.
\begin{pr}
\label{pr ineg KT}
There exists a finite constant $D$ depending only on $k$ such that for any complete tree $\tau$, and any  $y_{1:n}\in \bigY^{n}$: 
\begin{equation*}
0\le\sup\limits_{\theta \in \Theta_{\tau}}\log g_{\theta}(y_{1:n})-\log\bigKT_{\tau}^{n} (y_{1:n}) \le 
\sup\limits_{x_{1:n}}  \Bigg[  \log  \prod_{i=1}^{n}g_{\theta_{e,x_{i}},\eta}(y_{i}) - \log \bigKT_{e}^{n}(y_{1:n}|x_{1:n})\Bigg] +\dfrac{k-1}{2}|\tau|\log n + D
\end{equation*}
\end{pr}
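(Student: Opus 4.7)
My plan is to establish the two sides of the inequality separately, with the upper bound being the main content.

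The lower bound $\log\bigKT_{\tau}^{n}(y_{1:n})\le\sup_{\theta}\log g_\theta(y_{1:n})$ will be obtained by identifying $\bigKT_{\tau}^{n}(y_{1:n})$ as a Bayesian mixture of likelihoods. More precisely, swapping the order of summation and integration in the definition of $\bigKT_{\tau}^{n}$, one sees that
\begin{equation*}
\bigKT_{\tau}^{n}(y_{1:n}) = \int_{\Theta_{\tau}} \tilde g_{\theta}(y_{1:n})\,\pi^{n}(d\theta),
\end{equation*}
where $\tilde g_{\theta}$ is defined exactly like $g_\theta$ but with the uniform distribution $(1/k)^{d(\tau)}$ on $\bigX^{d(\tau)}$ in place of $\nu_{d(\tau)}$. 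Since an integral is bounded above by its supremum, $\bigKT_{\tau}^{n}(y_{1:n})\le\sup_{\theta}\tilde g_{\theta}(y_{1:n})$, and the discrepancy between $\tilde g_{\theta}$ and $g_\theta$ amounts to an at most $k$-dependent multiplicative constant that is absorbed in $D$.

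For the upper bound, the key ingredient is the classical Krichevsky--Trofimov inequality: for any non-negative integer counts $(a_{s}^{x})_{x\in\bigX}$ with total $a_s=\sum_x a_s^x$ and the Dirichlet$(\tfrac12,\ldots,\tfrac12)$ prior $\pi_s$ on the simplex,
\begin{equation*}
\sup_{(P_{s,x})_x}\prod_{x\in\bigX}P_{s,x}^{a_s^x}\le c_k\,(a_s+1)^{(k-1)/2}\int \prod_x P_{s,x}^{a_s^x}\,\pi_s(dP),
\end{equation*}
for a constant $c_k$ depending only on $k$. Applying this context by context, multiplying over $s\in\tau$, and using that the counts satisfy $\sum_s a_s\le n$, one obtains a uniform bound, for every fixed $x_{1:n}\in\bigX^{n}$,
\begin{equation*}
\sup_{\theta_t\in\Theta_{t,\tau}} g_{\theta_t}(x_{1:n})\le c_k^{|\tau|}\,n^{(k-1)|\tau|/2}\,\bigKT_{\tau,t}(x_{1:n}),
\end{equation*}
again up to a harmless initial-distribution constant.

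Combining this transition bound with the trivial inequality $\prod_i g_{\theta_{e,x_i},\eta}(y_i)\le R_e(y_{1:n},x_{1:n})\,\bigKT_{e}^{n}(y_{1:n}|x_{1:n})$ with $R_e(y_{1:n},x_{1:n})\eqdef \prod_i g_{\theta_{e,x_i},\eta}(y_i)/\bigKT_{e}^{n}(y_{1:n}|x_{1:n})$, the path decomposition of $g_\theta$ yields
\begin{equation*}
g_{\theta}(y_{1:n}) = \sum_{x_{1:n}} g_{\theta_t}(x_{1:n})\prod_i g_{\theta_{e,x_i},\eta}(y_i)\le c_k^{|\tau|} n^{(k-1)|\tau|/2}\Bigl[\sup_{x_{1:n}} R_e(y_{1:n},x_{1:n})\Bigr]\,\bigKT_{\tau}^{n}(y_{1:n}).
\end{equation*}
Taking the supremum over $\theta\in\Theta_{\tau}$ and logarithms gives the stated inequality, with $D$ gathering the constant $|\tau|\log c_k$ absorbed into $\tfrac{k-1}{2}|\tau|\log n$ for $n$ large enough, plus the constants from the initial-distribution alignment.

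The main obstacle is that the constants produced at each of the $|\tau|$ contexts multiply up to $c_k^{|\tau|}$, and a priori $|\tau|\log c_k$ depends on $\tau$. The argument works because, for any $n\ge c_k^{2/(k-1)}$, this factor is dominated by $\tfrac{k-1}{2}|\tau|\log n$, leaving only a $k$-dependent residual in $D$; the small-$n$ regime is handled by enlarging $D$ accordingly. Reconciling the $\nu_{d(\tau)}$ initial law of $g_\theta$ with the uniform initial law used in $\bigKT_{\tau}^{n}$ is the other technicality absorbed in $D$.
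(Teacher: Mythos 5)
Your decomposition for the upper bound is the same as the paper's (bound the ratio of two sums by the maximum per-path ratio, then apply the Krichevsky--Trofimov inequality context by context), but your bookkeeping of the $\tau$-dependent constants has a genuine gap: the statement requires a single constant $D$ depending only on $k$, uniformly over \emph{all} complete trees, and your argument leaves behind additive terms that grow linearly in $|\tau|$. Concretely, after multiplying the per-context bounds and using $a_s+1\le n$ you obtain $\frac{k-1}{2}|\tau|\log n+|\tau|\log c_k+d(\tau)\log k$, where the last term comes from aligning the initial law of $g_\theta$ with the uniform weight $\left(1/k\right)^{d(\tau)}$ in $\bigKT_{\tau,t}$; you call this ``harmless'' and absorb it into $D$, but $d(\tau)$ is unbounded over complete trees, and so is $|\tau|\log c_k$. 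Your proposed fix --- dominating $|\tau|\log c_k$ by $\frac{k-1}{2}|\tau|\log n$ for $n\ge c_k^{2/(k-1)}$ --- spends the entire $\frac{k-1}{2}|\tau|\log n$ budget on the constant and leaves nothing for the main term; what it actually yields is a bound with leading coefficient $(k-1)$ rather than $\frac{k-1}{2}$, which is not the stated inequality (and the coefficient matters for the penalty calibration later in the paper). The missing idea is to keep the per-context counts instead of bounding each by $n$: by concavity of $\log$, $\sum_{s\in\tau}\log a_s\le|\tau|\log(n/|\tau|)$, so the transition bound becomes $\frac{k-1}{2}|\tau|\log n-\frac{k-1}{2}|\tau|\log|\tau|+|\tau|\log k+d(\tau)\log k$; the negative term $-\frac{k-1}{2}|\tau|\log|\tau|$, together with $d(\tau)\le(|\tau|-k)/(k-1)$ for complete trees, dominates every linear-in-$|\tau|$ term as $|\tau|\to\infty$, and that is precisely how the paper extracts a $D$ depending only on $k$. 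A secondary remark: for the left inequality your comparison of the two initial laws only gives $\sup_\theta\log g_\theta(y_{1:n})-\log\bigKT^n_\tau(y_{1:n})\ge -C$, and a negative constant cannot be ``absorbed in $D$'' on that side of the chain; the paper itself never proves the lower bound, which is not used elsewhere.
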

\begin{proof}
Let $\tau$ be a complete tree. For any $\theta\in\Theta_{\tau}$,
\begin{eqnarray*}
\dfrac{g_{\theta}(y_{1:n})}{\bigKT_{\tau}^{n} (y_{1:n})} & = & \dfrac{ \sum\limits_{x_{1:n}}  g_{\theta_{t}} (x_{1:n}) \prod_{i=1}^{n}g_{\theta_{e,x_{i}},\eta}(y_{i})}{ \sum\limits_{x_{1:n}}  \bigKT_{\tau} (x_{1:n})\bigKT_{e}^{n}(y_{1:n}|x_{1:n})} \\
& \leq &  \max\limits_{x_{1:n}} \dfrac{ g_{\theta_{t}}(x_{1:n})  \prod_{i=1}^{n}g_{\theta_{e,x_{i}},\eta}(y_{i})}{  \bigKT_{\tau} (x_{1:n})\bigKT_{e}^{n}(y_{1:n}|x_{1:n})}.
\end{eqnarray*}
Thus, 
\begin{equation*}
 \log \dfrac{g_{\theta}(y_{1:n})}{\bigKT_{\tau}^{n} (y_{1:n})}   \leq   \sup\limits_{x_{1:n}} \Bigg[  \log \prod_{i=1}^{n}g_{\theta_{e,x_{i}},\eta}(y_{i})\\
  - \log \bigKT_{e}^{n}(y_{1:n}|x_{1:n})  + |\tau| \gamma \bigg(\dfrac{n}{|\tau|} \bigg)+ d(\tau) \log k  \Bigg]
\end{equation*}
where $\gamma(x) = \dfrac{k-1}{2} \log x + \log k$, using \cite{EG11}.
Then
\begin{equation*} \log \dfrac{g_{\theta}(y_{1:n})}{\bigKT_{\tau}^{n} (y_{1:n})}\leq \sup\limits_{x_{1:n}}  \Bigg[  \log \prod_{i=1}^{n}g_{\theta_{e,x_{i}},\eta}(y_{i})\\ - \log \bigKT_{e}^{n}(y_{1:n}|x_{1:n})\Bigg] +  \dfrac{k-1}{2}|\tau|\log n + D(\tau)
\end{equation*}
where $D(\tau)= - \dfrac{k-1}{2}|\tau|\log |\tau|+ |\tau| \log k +d(\tau)\log k$.  Now, since $\tau$ is complete, $d(\tau) \leq \dfrac{|\tau|-k}{k-1}$, so that
$$
D(\tau)\le |\tau| \Big( \log k -\dfrac{k-1}{2}\log |\tau|\Big) + \dfrac{|\tau|-k}{k-1} \log k.
$$
But the upper bound in the inequality  tends to $-\infty$ when $|\tau|$ tends to $\infty$, so that  there exists a constant $D$ depending only on $k$ such that for any complete tree $\tau$, $D(\tau) \le D$.
\end{proof}

\subsection{Strong consistency theorem}

Let $\theta^{\star} = (\theta_{t}^{\star},\theta_{e}^{\star})$ with $\theta_{t}^{\star}= (P^{\star}_{s,i})_{s \in \tau^{\star}, i\in\bigX}$, and $\theta_{e}^{\star}= (\theta_{e,1}^{\star},...,\theta_{e,k}^{\star},\eta^{\star})$ be the true parameters of the VLHMM.\\
Let us now define for any positive $\alpha$, the penalty:
\begin{equation}
\label{pen} 
\quad pen_{\alpha}(n,\tau) = \left[\sum\limits_{t=1}^{|\tau|} \dfrac{(k-1)t + \alpha}{2}\right] \log n
\end{equation}
Notice that the complexity of the model is taken into account through the cardinality of the tree $\tau$.\\
We need to introduce further assumptions. 
\begin{itemize}
\item
\textbf{(A1)}. The Markov chain $(({X}_{n-d(\tau^{\star})+1},\ldots, X_{n}))_{n\geq d(\tau^{\star})}$ is irreducible.
\item
\textbf{(A2)}. For any complete tree $\tau$ such that $|\tau| \leq |\tau^{\star}|$ and which is not equivalent to $\tau^{\star}$, for any $\theta\in\Theta_{\tau}$, the random sequence $(\theta_{e,X_{n}})_{n\in \bigZ}$ where $(X_{n})_{n \in \bigZ}$ is a VLMC with transition probabilities $\theta_{t}$, has a different distribution than  $(\theta^{\star}_{e,X_{n}})_{n\in \bigZ}$ where $(X_{n})_{n \in \bigZ}$ is a VLMC with transition probabilities $\theta^{\star}_{t}$.
\item
\textbf{(A3)}. The family $\left\{g_{\theta_{e}},\theta_{e} \in \Theta_{e},\right\}$ is such that for any probability distributions $(\alpha_{i})_{i=1,...,k}$ and $(\alpha'_{i})_{i=1,...,k}$ on 
$\left\{ 1,...,k\right\}$, any $(\theta_{1},\ldots,\theta_{k},\eta)\in \Theta_{e}$ and $(\theta'_{1},\ldots,\theta'_{k},\eta')\in \Theta_{e}$, if 
$$
\sum\limits_{i=1}^{k} \alpha_{i} g_{\theta_{i},\eta} =\sum\limits_{i=1}^{k} \alpha'_{i} g_{\theta'_{i},\eta'}$$ then, \\ $$\sum\limits_{i=1}^{k} \alpha_{i} \delta_{\theta_{i}} = \sum\limits_{i=1}^{k} \alpha'_{i} \delta_{\theta'_{i}} \ \text{and} \ \eta=\eta' $$

\item
\textbf{(A4)}. For any $y\in\bigY$, $\theta_{e} \longmapsto g_{\theta_{e}}(y) = (g_{\theta_{e,i}, \eta}(y))_{i\in \bigX}$ is continuous
 and tends to zero when $||\theta_{e}||$ tends to infinity.
 \item
\textbf{(A5)}. For any $i\in\bigX$, $E_{\theta^{\star}}\left[ |\log g_{\theta^{\star}_{e,i},\eta^{\star}}(Y_{1})|\right]<\infty$.
\item
\textbf{(A6)}. For any $\theta_{e} \in \Theta_{e}$,
there exists $\delta>0$ such that : $E_{\theta^{\star}}\left[ \sup\limits_{ ||\theta_{e}' -\theta_{e} ||< \delta }(\log g_{\theta_{e}'}(Y_{1}))^{+}\right]<\infty$ .
\end{itemize}

\begin{theo}\label{theo strog consistency}
Assume that {\bf{(A1)}} to {\bf{(A6)}} hold, and that moreover there exists
a positive real number $b$ such that
\begin{equation}
\label{hypoprior}
\sup\limits_{\theta_{e} \in \Theta_{e}}\sup\limits_{x_{1:n}}  \Bigg[  \log  \prod_{i=1}^{n}g_{\theta_{e,x_{i}},\eta}(Y_{i}) - \log \bigKT_{e}^{n}(Y_{1:n}|x_{1:n})\Bigg] \le   b\log n \quad 
\end{equation}
$\bigP_{\theta^{\star}}$ - eventually almost surely.
If one chooses $\alpha>2(b+1)$ in the penalty (\ref{pen}), then $\hat{\tau}_{n}\sim\tau^{\star}$, $\bigP_{\theta^{\star}}$ - eventually almost surely.
\end{theo}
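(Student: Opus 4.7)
The plan is to rule out, eventually almost surely, every complete tree $\tau\not\sim\tau^\star$ as a minimizer of (\ref{tau}). Since $\theta^\star\in\Theta_{\tau^\star}$ gives the bound $\sup_\theta\log g_\theta|_{\tau^\star}(Y_{1:n})\ge\log g_{\theta^\star}(Y_{1:n})$, it suffices to show, eventually a.s.,
\begin{equation*}
\sup_\theta\log g_\theta|_\tau(Y_{1:n})-\log g_{\theta^\star}(Y_{1:n})<pen_\alpha(n,\tau)-pen_\alpha(n,\tau^\star)
\end{equation*}
for every such $\tau$. The infinite family of candidate trees is handled in three stages.

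\textbf{Step 1 -- A priori bound on $|\hat\tau_n|$.} Proposition \ref{pr ineg KT} together with (\ref{hypoprior}) gives, eventually a.s.,
\begin{equation*}
\sup_\theta\log g_\theta|_\tau(Y_{1:n})\le\log\bigKT_\tau^n(Y_{1:n})+\tfrac{k-1}{2}|\tau|\log n+b\log n+D.
\end{equation*}
Each $\bigKT_\tau^n$ is a probability density on $\bigY^n$, so a Markov inequality yields $\bigP_{\theta^\star}\bigl[\log\bigKT_\tau^n-\log g_{\theta^\star}>t\bigr]\le Ce^{-t}$, the constant $C$ absorbing a bounded initial-distribution factor (well-defined under (A1)). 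Since the number of complete trees of a given size grows at most like $A^{|\tau|}$ (Fuss-Catalan), taking $t=|\tau|\log n+2\log n$ and combining the union bound with Borel-Cantelli yields, eventually a.s. and uniformly in $\tau$, $\log\bigKT_\tau^n-\log g_{\theta^\star}\le(|\tau|+2)\log n+O(1)$. Because the penalty is quadratic in $|\tau|$ whereas this bound and the Proposition \ref{pr ineg KT} correction are linear in $|\tau|$, the condition $\hat\tau_n=\tau$ forces $|\tau|$ to be bounded by a deterministic $M$, eventually a.s.

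\textbf{Step 2 -- Overestimation.} Among the finitely many remaining $\tau$ with $|\tau|\le M$ such that some relabeling of $\tau^\star$ is a subtree of $\tau$ and $\tau\not\sim\tau^\star$ (so $|\tau|\ge|\tau^\star|+1$), a single-tree Markov and Borel-Cantelli argument gives $\log\bigKT_\tau^n-\log g_{\theta^\star}\le(1+\epsilon)\log n$ eventually a.s. for any fixed $\epsilon>0$. Plugging into Proposition \ref{pr ineg KT} and using
\begin{equation*}
pen_\alpha(n,\tau)-pen_\alpha(n,\tau^\star)=\tfrac{|\tau|-|\tau^\star|}{2}\left[\tfrac{(k-1)(|\tau|+|\tau^\star|+1)}{2}+\alpha\right]\log n,
\end{equation*}
the net margin is bounded below by $\bigl(\tfrac{\alpha}{2}-b-1-\epsilon\bigr)\log n-O(1)$, which is positive for $\alpha>2(b+1)$ and $\epsilon$ chosen small; the worst case is $|\tau|=|\tau^\star|+1$, and the margin only grows with $|\tau|$.

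\textbf{Step 3 -- Underestimation.} For the remaining finitely many $\tau$ with $|\tau|\le M$ admitting no relabeling of $\tau^\star$ as a subtree, the true distribution does not belong to $\{g_\theta:\theta\in\Theta_\tau\}$. Under (A1) the process $(X_n,Y_n)$ is stationary ergodic, so $n^{-1}\log g_{\theta^\star}(Y_{1:n})\to\ell^\star$ a.s. by the ergodic theorem. Assumptions (A4)-(A6) compactify $\Theta_e$ via a ``point at infinity'' (where the density vanishes by (A4)) and supply the integrable envelope of (A6), delivering a uniform law of large numbers $n^{-1}\sup_\theta\log g_\theta|_\tau\to\ell(\tau)$ a.s. Identifiability assumptions (A2)-(A3), combined for trees with $|\tau|>|\tau^\star|$ not containing $\tau^\star$ with a reduction to the common refinement $\tau\vee\tau^\star$, yield $\ell(\tau)<\ell^\star$. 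Hence the log-likelihood gap is $-\Theta(n)$, dwarfing the $O(\log n)$ penalty gap. The main obstacle is precisely this uniform ergodic argument under the noncompactness allowed by (A4)-(A6), together with extending identifiability beyond (A2)'s a priori range $|\tau|\le|\tau^\star|$ to cover the overestimated-but-non-containing trees that survive Step 1.
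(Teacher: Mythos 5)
Your overall strategy (information-theoretic mixture bound plus Markov/Borel--Cantelli for an infinite family of trees, then an ergodic/identifiability argument for the finitely many survivors) is the same as the paper's, and Steps~1 and~2 are sound as far as they go. The problem is your case decomposition. You split the surviving trees by whether a relabeling of $\tau^{\star}$ is a subtree of $\tau$, and you send every non-containing tree --- including those with $|\tau|>|\tau^{\star}|$ --- to Step~3, where the conclusion $\ell(\tau)<\ell^{\star}$ must come from identifiability. But {\bf(A2)} is stated only for complete trees with $|\tau|\le|\tau^{\star}|$, so for an overestimating tree that refines one branch of $\tau^{\star}$ while truncating another (e.g.\ with $k=2$ and $\tau^{\star}=\{0,01,11\}$, the tree $\tau=\{000,100,10,1\}$ has $|\tau|=4>3$ and does not contain $\tau^{\star}$) you have no hypothesis guaranteeing that the true law of $(Y_n)$ is not reproducible inside $\Theta_{\tau}$. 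You flag this yourself as an ``obstacle'' to be overcome by ``extending identifiability beyond (A2)'s a priori range,'' but that extension is not available from the stated assumptions, so as written Step~3 does not close for these trees. This is a genuine gap, not merely a presentational one.

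The repair is to partition by cardinality rather than by containment, which is exactly what the paper does: its first stage sums the Markov bound over \emph{all} complete trees with $|\tau|>|\tau^{\star}|$ (using $CT(t)\le 16^{t}$ and the quadratic penalty), obtaining directly that $|\hat{\tau}_n|\le|\tau^{\star}|$ eventually a.s.\ with no identifiability input whatsoever; the ergodic argument (via Theorem~3 of \cite{L92}, applied to the HMM $(W_n,Y_n)$ built on a common refinement $\tau_M$ of $\tau$ and $\tau^{\star}$) is then invoked only for the finitely many $\tau$ with $|\tau|\le|\tau^{\star}|$ and $\tau\nsim\tau^{\star}$, which is precisely the range covered by {\bf(A2)}. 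Your own Step~2 computation (single-tree Markov bound plus the penalty margin $\bigl(\tfrac{\alpha}{2}-b-1-\epsilon\bigr)\log n$) nowhere uses the containment hypothesis, so you could simply let it absorb every tree with $|\tau|\ge|\tau^{\star}|+1$ and restrict Step~3 to $|\tau|\le|\tau^{\star}|$; with that reorganization your argument becomes a correct, slightly more roundabout version of the paper's proof (your Step~1 yields only a deterministic bound $M$ on $|\hat{\tau}_n|$ where the paper gets $|\tau^{\star}|$ in one pass, but that costs nothing).
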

Notice that, to apply this theorem, one has to find a sequence of priors $\pi_{e}^{n}$ on $\Theta_{e}$ such that (\ref{hypoprior}) holds. The remaining of the section will prove that it is possible for situations in which priors may be defined as in previous works about HMM order estimation, while in the next section, we will prove that it is possible to find a prior in the important case of Gaussian emissions with unknown variance.\\
In the following proof, the assumption (\ref{hypoprior}) insures that $|\hat{\tau}_{n}| \le |\tau^{\star}|$  eventually almost surely, while assumptions {\bf(A1-6)} insure that for any complete tree $\tau$ such that $|\tau| < |\tau^{\star}|$ or  $|\tau| = |\tau^{\star}|$  and $\tau \nsim  \tau^{\star}$, $\hat{\tau}_{n} \neq \tau^{\star}$ $\bigP_{\theta^{\star}}$ - eventually almost surely. In particular { \bf(A2)} holds whenever $\theta_{e,x}^{\star} \neq \theta_{e,y}^{\star}$ if $(x,y) \in \bigX^{2}$ and $x \neq y$. \\
\begin{proof}
The proof will be structured as follow : we first prove that  $\bigP_{\theta^{\star}}$ - eventually almost surely, $|\hat{\tau}_{n}| \le |\tau^{\star}|$. We then prove that for any complete tree $\tau$ such that $|\tau|\le|\tau^{\star}|$ and $\tau \nsim \tau^{\star}$, $\hat{\tau}_{n} \nsim \tau$ $\bigP_{\theta^{\star}}$ - eventually almost surely. This will end the proof since there is a finite number of such trees. For any $n\in\bigN$, we denote by $E_{n}$ the event \\[0.2cm]
\begin{equation*}
E_{n} \ : \ \Bigg[
\sup\limits_{\theta_{e} \in \Theta_{e}}\sup\limits_{x_{1:n}}  \bigg(  \log  \prod_{i=1}^{n}g_{\theta_{e,x_{i}},\eta}(Y_{i})
 - \log \bigKT_{e}^{n}(Y_{1:n}|x_{1:n})\bigg) \le b \log n \quad 
 \Bigg]
 \end{equation*} 
By using (\ref{hypoprior}) and Borel-Cantelli Lemma, to get that $\bigP_{\theta^{\star}}$ - eventually almost surely, $|\hat{\tau}_{n}| \le |\tau^{\star}|$,
it is enough to show that \[ \sum\limits_{n=1}^{\infty} \bigP_{\theta^{\star}} \left\{\left( \left| \hat{\tau}_{n}\right| >  \left|\tau^{\star}\right| \right) \bigcap E_{n} \right\} < \infty.
\]
Let $\tau$ be a complete tree such that  $|\tau|>|\tau^{\star}|$. Using Proposition \ref{pr ineg KT},
\begin{eqnarray*}
 \bigP_{\theta^{\star}} \left\{\left(  \hat{\tau}_{n}= \tau^{\star} \right) \bigcap E_{n} \right\} & \le &  \bigP_{\theta^{\star}} \Bigg\{\bigg(   \sup\limits_{\theta \in \Theta_{\tau}} \log g_{\theta}(Y_{1:n}) - pen_{\alpha}(n,\tau) \geq   \log g_{\theta^{\star}}(Y_{1:n}) -  pen_{\alpha}(n,\tau^{\star})\bigg)\bigcap E_{n}\Bigg\}\\
&\le& \bigP_{\theta^{\star}} \Bigg\{\bigg(   \log \bigKT_{\tau}^{n}(y_{1:n}) +  \sup\limits_{\theta_{e} \in \Theta_{e}} \sup\limits_{x_{1:n}}  \big[  \log  \prod_{i=1}^{n}g_{\theta_{e,x_{i}},\eta}(Y_{i})  - \log \bigKT_{e}^{n}(Y_{1:n}|x_{1:n})\big]\\
& &  \quad \quad +   \dfrac{k-1}{2}|\tau|\log n + D -  \log g_{\theta^{\star}}(Y_{1:n})  +  pen_{\alpha}(n,\tau^{\star})  -  pen_{\alpha}(n,\tau)\geq 0 \bigg) \bigcap E_{n}\Bigg\}  \\[0.4cm]
 & \le &\bigP_{\theta^{\star}}\left\{   g_{\theta^{\star}}(Y_{1:n})\leq  \bigKT_{\tau}^{n}(Y_{1:n})   \right\} \exp\left(e_{\tau,n}\right)
\end{eqnarray*}
with
\begin{equation*}
e_{\tau,n,} = \dfrac{k-1}{2} |\tau| \log n +b\log n +D+ pen_{\alpha}(n,\tau^{\star}) - pen_{\alpha}(n,\tau).
\end{equation*}
But
\begin{eqnarray*}
 e_{\tau,n}  &  = & \dfrac{k-1}{2} |\tau| \log n  + b \log n 
 +D  + \sum\limits_{t=1}^{|\tau^{\star}|} \dfrac{(k-1)t + \alpha}{2} \log n -  \sum\limits_{t=1}^{|\tau|} \dfrac{(k-1)t + \alpha}{2} \log n \\
& = & \dfrac{k-1}{2} |\tau| \log n  +  b \log n+D  - \sum\limits_{t=|\tau^{\star}|+1}^{|\tau|} \dfrac{(k-1)t + \alpha}{2} \log n\\
& \le & -\dfrac{\alpha}{2} \bigg( |\tau| - |\tau^{\star}|\bigg)\log n + b \log n+D,
\end{eqnarray*}
so that
\begin{eqnarray*}
\bigP_{\theta^{\star}} \bigg\{\big(  \hat{\tau}_{n}= \tau^{\star} \big) \bigcap  E_{n} \bigg\} &\le& e^{ -\dfrac{\alpha}{2} \bigg( |\tau| - |\tau^{\star}|\bigg)\log n +b \log n +D}\\
 & = &  C.n^{-\dfrac{\alpha}{2}(|\tau|-|\tau^{\star}|)+b} 
\end{eqnarray*}
 for some constant $C$. Thus
\begin{equation*}
  \bigP_{\theta^{\star}} \left\{\left( \left| \hat{\tau}_{n}\right| >  \left|\tau^{\star}\right| \right) \bigcap E_{n} \right\}   \leq C  \sum\limits_{t=|\tau^{\star}|+1}^{\infty} CT(t) n^{-\dfrac{\alpha}{2}(t-|\tau^{\star}|)+b}
 \end{equation*}
where $CT(t)$ is the number of complete trees with $t$ leaves. But using Lemma 2 in \cite{Gari06}, $CT(t)\leq 16^{t}$ so that
\begin{eqnarray*}
  \bigP_{\theta^{\star}} \left\{\left( \left| \hat{\tau}_{n}\right| >  \left|\tau^{\star}\right| \right) \bigcap E_{n} \right\} & \leq &  Cn^{b} 16^{|\tau^{\star}|} \sum\limits_{t=1}^{\infty} \big[ 16n^{-\alpha/2}\big]^{t}\\
&= & O(n^{-\alpha/2+b})
\end{eqnarray*}
which is summable  if  $\alpha>2(b+1)$.\\

Let now $\tau$ be a tree such that $|\tau| \le |\tau^{\star}|$ and $\tau \nsim \tau^{\star}$.
Let $\tau_{M}$ be a complete tree such that $\tau$ and $\tau^{\star}$ are both a subtree of $\tau_{M}$. 
Then, by setting for any integer $n\geq d(\tau_{M})-1$, $W_{n} = [X_{n-d(\tau_{M})+1 :n}]$, for any $\theta\in\Theta_{\tau}\cup\Theta_{\tau^{\star}}$,
$(W_{n},Y_{n})_{n \in \bigZ}$ is a HMM under $\bigP_{\theta}$.
Following the proof of  Theorem 3 of  \cite{L92}, we obtain that
there exists $K>0$ such that $\bigP_{\theta^{\star}}$-eventually a.s., 
\[\dfrac{1}{n} \log g_{\theta^{\star}}(Y_{1:n}) - \sup\limits_{\theta \in \Theta_{\tau}}\dfrac{1}{n} \log g_{\theta}(Y_{1:n}) \ge K\]
so that 
\begin{equation*}
\log g_{\theta^{\star}}(Y_{1:n})- pen(n,\tau^{\star}) - \sup\limits_{\theta \in \Theta_{\tau}} \log g_{\theta}(Y_{1:n}) +pen(n,\tau) >0, \bigP_{\theta^{\star}}\text{-eventually a.s.,}
\end{equation*}
which finishes the proof of Theorem \ref{theo strog consistency}.
\end{proof}

\subsection{Gaussian emissions with known variance} 
Here, we do not need the parameter $\eta$  so we omit it. Then $\Theta_{e}=\{\theta_{e}=(m_{1},\ldots,m_{k})\in\bigR^{k}\}$. The conditional likelihood is given, for any $\theta_{e}=(m_{x})_{x\in\bigX}$ by
\[g_{\theta_{e,x}}(y) = \dfrac{1}{\sqrt{2\pi\sigma^{2}}} \exp(-\dfrac{(y -m_{x})^{2}}{2\sigma^{2}}).\]
\begin{pr}
\label{pr-gaus}
Assume {\bf (A1-2)}.
If one chooses
$\alpha > k+2$ in the penalty (\ref{pen}),  $ \hat{\tau}_{n} \sim \tau^{\star}$, $\bigP_{\theta^{\star}}$ - eventually a.s.
\end{pr}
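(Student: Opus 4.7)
The plan is to invoke Theorem \ref{theo strog consistency} by exhibiting a suitable family of priors $\pi_e^n$ on $\Theta_e=\mathbb{R}^k$ for which the information-theoretic bound (\ref{hypoprior}) holds with some $b$ strictly less than $k/2+1/2$, namely any $b>k/2$. Since $\alpha>k+2$ is equivalent to $\alpha>2(b+1)$ for some $b>k/2$, the conclusion then follows directly. Before that, I would quickly verify assumptions \textbf{(A3)}--\textbf{(A6)}: \textbf{(A3)} is Teicher's identifiability of finite Gaussian mixtures with common known variance; \textbf{(A4)} follows from continuity of $m\mapsto e^{-(y-m)^2/(2\sigma^2)}$ and its decay to $0$ as $|m|\to\infty$; \textbf{(A5)} holds because $|\log g_{\theta_{e,i}^\star}(Y_1)|$ is bounded by an affine function of $(Y_1-m_i^\star)^2$, and the marginal of $Y_1$ is a finite Gaussian mixture; \textbf{(A6)} holds because $\log g_{\theta_e'}(Y_1)\le -\tfrac{1}{2}\log(2\pi\sigma^2)$ uniformly in $\theta_e'$, so $(\log g_{\theta_e'}(Y_1))^+$ is uniformly bounded.

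The heart of the argument is the choice of prior. I would take $\pi_e^n$ to be the product $\bigotimes_{j=1}^k \mathcal{N}(0,\sigma_n^2)$ with $\sigma_n^2=(\log n)^2$ (any sequence going to infinity slower than any power of $n$ but faster than $\log n$ works). Because $\prod_{i=1}^n g_{\theta_{e,x_i}}(y_i)$ factorizes according to the partition of $\{1,\dots,n\}$ induced by $x_{1:n}$, both the supremum over $\theta_e$ and the mixture $\bigKT_e^n(y_{1:n}|x_{1:n})$ split as products over the $k$ states. A routine Gaussian completion of the square gives, with $n_j=\#\{i:x_i=j\}$ and $\bar y_j=n_j^{-1}\sum_{i:x_i=j} y_i$,
\begin{equation*}
\log\sup_{\theta_e}\prod_{i=1}^n g_{\theta_{e,x_i}}(y_i)-\log\bigKT_e^n(y_{1:n}|x_{1:n})
=\sum_{j:n_j>0}\left[\frac{n_j\bar y_j^{\,2}}{2(n_j\sigma_n^2+\sigma^2)}+\frac{1}{2}\log\!\Big(1+\frac{n_j\sigma_n^2}{\sigma^2}\Big)\right].
\end{equation*}

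Now I would bound each sum separately, uniformly in $x_{1:n}\in\mathbb{X}^n$. The second sum has at most $k$ terms, and each logarithm is at most $\log(1+n\sigma_n^2/\sigma^2)=\log n+2\log\log n+O(1)$, so this sum is $(k/2)\log n+O(\log\log n)$. For the first sum, $\frac{n_j\bar y_j^{\,2}}{2(n_j\sigma_n^2+\sigma^2)}\le\bar y_j^{\,2}/(2\sigma_n^2)\le \max_{1\le i\le n}Y_i^2/(2\sigma_n^2)$. The key probabilistic input is that under $\mathbb{P}_{\theta^\star}$ the marginal of $Y_1$ is a finite mixture of Gaussians, hence sub-Gaussian, so $\max_{1\le i\le n}Y_i^2=O(\log n)$ $\mathbb{P}_{\theta^\star}$-a.s.\ (standard Borel--Cantelli applied to Gaussian tail bounds). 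With $\sigma_n^2=(\log n)^2$ the first sum is therefore $O(1/\log n)=o(1)$.

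Putting these bounds together, for any $b>k/2$, the left-hand side of (\ref{hypoprior}) is bounded by $b\log n$ $\mathbb{P}_{\theta^\star}$-eventually almost surely. Choosing $b$ arbitrarily close to $k/2$ in Theorem \ref{theo strog consistency}, the required threshold $\alpha>2(b+1)$ can be taken arbitrarily close to $k+2$, so the hypothesis $\alpha>k+2$ of Proposition \ref{pr-gaus} suffices, and $\hat\tau_n\sim\tau^\star$ $\mathbb{P}_{\theta^\star}$-eventually a.s. The main subtlety is the balance between the two terms in the KT log-ratio: $\sigma_n^2$ must be large enough to kill the $\max Y_i^2/\sigma_n^2$ contribution, yet not so large that $\log(1+n\sigma_n^2/\sigma^2)$ exceeds $\log n$ by more than a lower-order term; the choice $\sigma_n^2=(\log n)^2$ comfortably achieves both.
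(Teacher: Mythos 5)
Your proposal is correct and follows essentially the same route as the paper: verify \textbf{(A3)}--\textbf{(A6)}, take a product of centered Gaussian priors with variance growing slowly to infinity, bound the Krichevsky--Trofimov log-ratio by $\tfrac{k}{2}\log n$ plus lower-order terms using the almost-sure $O(\log n)$ bound on $\max_{1\le i\le n}Y_i^2$, and conclude via Theorem \ref{theo strog consistency} with any $b>k/2$. The only differences are cosmetic: the paper cites \cite{CGG09} for the mixture bound and takes prior variance $\tau_n^2=\tfrac{5\sigma^2 k\log n}{2}$, whereas you derive the bound explicitly by completing the square and take $(\log n)^2$; both choices achieve the same balance.
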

\begin{proof}

The identifiability of the Gaussian model {\bf (A3)} has been proved by Yakowitz and Spragins in \cite{yakowitz}, it is easy to see that Assumptions {\bf (A4)} to {\bf (A6)} hold. 
Now, we define the prior measure $\pi_{e}^{n}$ on $\Theta_{e}$ as the probability distribution under which $\theta_{e}=(m_{1}, . . . ,m_{k})$ is a vector of $k$ independent random variables with centered Gaussian distribution with variance $\tau^{2}_{n}$. Then, using \cite{CGG09}, $\bigP_{\theta^{\star}}$ -eventually a.s., 

\begin{equation*}
\sup\limits_{\theta_{e}\in \Theta_{e}} \max\limits_{x_{1:n} \in \bigX^{n}} \bigg[\log \prod_{i=1}^{n}g_{\theta_{e,x_{i}}}(Y_{i})- \log  \bigKT_{e}^{n}(Y_{1:n}|x_{1:n}) \bigg]\le \dfrac{k}{2} \log(1 + \dfrac{n\tau^{2}_{n}}{k\sigma^{2}}) + \dfrac{k}{2\tau^{2}_{n}} 5 \sigma^{2} \log n.
\end{equation*}
Thus, by choosing $\tau_{n}^{2} = \dfrac{5\sigma^{2}k\log(n)}{2}$, we get that for any $\epsilon >0$, 

\begin{equation*}\sup\limits_{\theta_{e} \in \Theta_{e}} \max\limits_{x_{1:n} \in \bigX^{n}} \bigg[\log \prod_{i=1}^{n}g_{\theta_{e,x_{i}}}(Y_{i}) - \log  \bigKT_{e}^{n}(Y_{1:n}|x_{1:n}) \bigg] \le \dfrac{k+\epsilon}{2} \log n 
\end{equation*} 

$\bigP_{\theta^{\star}}$ -eventually almost surely, and (\ref{hypoprior}) holds for any $b > \dfrac{k}{2}$.
\end{proof}

\subsection{Poisson emissions}
Now the conditional distribution of $Y$ given $X=x$ is Poisson with mean $m_{x}$ and $\Theta_{e}=\left\{\theta_ {e}=(m_{1},...,m_{k}) \big| \ \forall j\in \bigX, \ m_{j}>0\right\}$.
\begin{pr}
Assume {\bf (A1-2)}.
If one chooses $\alpha > k+2$ in  in the penalty (\ref{pen}), $ \hat{\tau}_{n} \sim \tau^{\star}$ $\bigP$ -eventually a.s.
\end{pr}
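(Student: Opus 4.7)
The plan is to apply Theorem \ref{theo strog consistency}. Since $\alpha > k+2$, it suffices to exhibit a sequence of priors $\pi_{e}^{n}$ on $\Theta_{e}$ such that (\ref{hypoprior}) holds for some $b \in (k/2,(\alpha-2)/2)$. Verification of {\bf(A3)--(A6)} is routine: {\bf(A3)} is the identifiability of finite Poisson mixtures, classical after Teicher and Yakowitz--Spragins; {\bf(A4)} is immediate since $g_m(y) = m^y e^{-m}/y!$ is continuous and vanishes as $m\to\infty$ for every $y\geq 1$; {\bf(A5)--(A6)} follow from the fact that $Y_1$ has moments of all orders under $\bigP_{\theta^{\star}}$.

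For the mixture bound, I take $\pi_{e}^{n}$, for every $n$, to be a product of $k$ independent Gamma$(1/2,\beta)$ distributions (a Jeffreys-like conjugate prior, for some fixed $\beta>0$). For any $x_{1:n} \in \bigX^n$, set $n_{j} = \#\{i : x_{i}=j\}$ and $S_{j} = \sum_{i : x_{i}=j} Y_{i}$. The factors $Y_{i}!$ cancel between the maximum likelihood and the mixture, and conjugacy gives the explicit factorization
\begin{equation*}
\bigKT_{e}^{n}(Y_{1:n}|x_{1:n})\prod_{i=1}^{n} Y_{i}! = \prod_{j=1}^{k} \frac{\beta^{1/2}\,\Gamma(\tfrac{1}{2}+S_{j})}{\Gamma(\tfrac{1}{2})\,(n_{j}+\beta)^{1/2+S_{j}}},
\end{equation*}
with the $j$th factor equal to $1$ if $n_{j}=0$. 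The MLE is $\hat m_{j}=S_{j}/n_{j}$, so the corresponding maximum-likelihood factor equals $\hat m_{j}^{S_{j}} e^{-S_{j}}$. A direct Stirling expansion of $\log\Gamma(\tfrac{1}{2}+S_{j})$ yields, uniformly in $S_{j}\geq 0$ and $n_{j}\geq 1$,
\begin{equation*}
\log\frac{\hat m_{j}^{S_{j}} e^{-S_{j}}\,\Gamma(\tfrac{1}{2})\,(n_{j}+\beta)^{1/2+S_{j}}}{\beta^{1/2}\,\Gamma(\tfrac{1}{2}+S_{j})} \leq \tfrac{1}{2}\log n_{j} + \beta\,\tfrac{S_{j}}{n_{j}} + C_{\beta}.
\end{equation*}
Summing over $j$ and bounding $S_{j}/n_{j}\leq \max_{i\leq n} Y_{i}$, one gets
\begin{equation*}
\sup_{\theta_{e}\in\Theta_{e}}\sup_{x_{1:n}}\Big[\log\prod_{i=1}^{n} g_{\theta_{e,x_{i}}}(Y_{i}) - \log\bigKT_{e}^{n}(Y_{1:n}|x_{1:n})\Big] \leq \tfrac{k}{2}\log n + \beta k \max_{1\leq i\leq n} Y_{i} + C'.
\end{equation*}

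To close, the standard Poisson tail estimate gives $\max_{i\leq n} Y_{i} = O(\log n/\log\log n) = o(\log n)$ $\bigP_{\theta^{\star}}$-almost surely (by Borel--Cantelli applied to the bound $\bigP_{\theta^{\star}}(Y_{1}\geq t) \leq \exp(-t\log(t/\lambda^{\star})+t)$ with $\lambda^{\star} = \max_{j} m_{j}^{\star}$). Hence the right-hand side above is at most $b\log n$ eventually for any $b>k/2$, (\ref{hypoprior}) is verified with some $b<(\alpha-2)/2$, and Theorem \ref{theo strog consistency} yields the conclusion. The only delicate step is making the Stirling bound uniform over the partition $x_{1:n}$, which is unproblematic because the leading term satisfies $\sum_{j}\tfrac{1}{2}\log n_{j}\leq (k/2)\log n$ irrespective of the partition, and the remainder is controlled by $\max_{i\leq n}Y_{i}$, a quantity that does not depend on $x_{1:n}$.
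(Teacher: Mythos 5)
Your proof is correct and follows essentially the same route as the paper: verify \textbf{(A3)}--\textbf{(A6)}, take an i.i.d.\ conjugate Gamma prior on the Poisson means, establish the mixture bound (\ref{hypoprior}) with leading term $\tfrac{k}{2}\log n$ and a remainder of order $\max_{i\le n}Y_i=o(\log n)$, and invoke Theorem \ref{theo strog consistency} with $b$ slightly larger than $k/2$. The only difference is cosmetic: the paper uses a Gamma$(t,1/2)$ prior and cites \cite{CGG09} for the key inequality, whereas you use a Jeffreys-type Gamma$(1/2,\beta)$ prior and derive the bound explicitly via conjugacy and Stirling, which makes your version self-contained.
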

\begin{proof}

The identifiability of the Gaussian model {\bf (A3)} has been proved by Teicher in \cite{Teicher}, it is easy to see that Assumptions {\bf (A4)} to {\bf (A6)} hold.
The prior $\pi_{e}^{n}$ on $\Theta_{e}$ is now defined such that
$m_{1},...,m_{k}$ are independent identically distributed with distribution Gamma($t$, 1/2). Then,
using \cite{CGG09}: 
\begin{equation*}\sup\limits_{\theta_{e}\in \Theta_{e}} \max\limits_{x_{1:n} \in \bigX^{n}} \bigg\{\log \prod_{i=1}^{n}g_{\theta_{e,x_{i}}}(Y_{i})- \log  \bigKT_{e}^{n}(Y_{1:n}|x_{1:n})\bigg\} \le \dfrac{k}{2} \log \dfrac{n}{k} 
+ k t \dfrac{\log n}{\sqrt{\log \log n}} + \dfrac{k}{2} (1+t\log t)
 \end{equation*}
$\bigP_{\theta^{\star}}$ -eventually a.s..
Then, for any fixed $t>0$, for any $\epsilon >0$, eventually almost surely :

\begin{equation*}
\sup\limits_{\theta_{e}=(m_{1},...,m_{k}) \in \Theta_{e}} \max\limits_{x_{1:n} \in \bigX^{n}} \bigg\{ \log g_{\theta_{e}}(Y_{1:n}|x_{1:n}) - \log  \bigKT_{e}^{n}(Y_{1:n}|x_{1:n})\bigg\}  \le \left( \dfrac{k}{2} + \epsilon\right) \log n
\end{equation*}
$\bigP_{\theta^{\star}}$ -eventually almost surely, and (\ref{hypoprior}) holds for any $b > \dfrac{k}{2}$.
\end{proof}

\section{Gaussian emissions with unknown variance }
\label{sec:gauss}

We consider the situation where the emission distributions are Gaussian with the same, but unknown, variance $\sigma_{\star}^{2}$ and with a mean depending on the hidden state $x$. Let $\eta=-\dfrac{1}{2\sigma^{2}}$ and $ \theta_{e,j} = \dfrac{m_{j}}{\sigma^{2}}$ for all $j \in \bigX=\left\{1,..,k\right\}$. Here $$\Theta_{e}=\left\{ \left(\eta,\big(\theta_{e,j}\big)_{j=1,...,k}\right)   \ \bigg| \ \theta_{e,j}\in\bigR,\eta<0 \right\} \ .$$
If $x_{1:n}\in \bigX^{n}$, for any $j\in \bigX$, we set $I_{j} = \left\{i | \\ x_{i}=j \right\}$ and $n_{j} = |I_{j}|$. 
For sake of simplicity we omit  $x_{1:n}$ in the notation though $I_{j}$ and $n_{j}$ depend on $x_{1:n}$.
The conditional likelihood is given, for any $x_{1:n}$ in $\bigX^{n}$, for any $y_{1:n}$ in $\bigY^{n}$,  by
\begin{equation*}
\prod_{i=1}^{n}g_{\theta_{e,x_{i}},\eta}\left(y_{i}\right)= \dfrac{1}{\sqrt{2\pi}^{n}} \prod\limits_{j=1}^{k} \exp\bigg[ \eta \sum\limits_{i\in I_{j}} y_{i}^{2} + \theta_{e,j}\sum\limits_{i\in I_{j}} y_{i} - n_{j} A\left( \eta,\theta_{e,j}\right)\bigg] 
\end{equation*}
where \[ A\left( \eta,\theta_{e,j}\right) = -\dfrac{\theta_{e,j}^{2}}{4\eta } - \dfrac{1}{2} \log(-2\eta) \]
\begin{theo}
\label{theorem gauss}
Assume {\bf (A1-2)}.
If one chooses $\alpha> k+3$ in the penalty (\ref{pen}), then $\hat{\tau}_{n} \sim \tau^{\star}$, $\bigP_{\theta^{\star}}$ - eventually a.s.
\end{theo}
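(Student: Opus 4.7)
The strategy is to apply Theorem \ref{theo strog consistency}: I must verify assumptions \textbf{(A3)}--\textbf{(A6)} and exhibit a sequence of priors $\pi_{e}^{n}$ on $\Theta_{e}$ such that the mixture bound (\ref{hypoprior}) holds with some $b$ satisfying $2(b+1)\le \alpha$. Since the stated threshold is $\alpha>k+3$, the target is $b=(k+1)/2+o(1)$, the $(k+1)/2$ reflecting the effective dimension of $\Theta_e$ (the $k$ means plus the single inverse-variance parameter $\eta$). Identifiability \textbf{(A3)} for a finite mixture of Gaussians with a common variance follows from \cite{Teicher} / \cite{yakowitz}; continuity of $\theta_e\mapsto g_{\theta_e}(y)$ and decay to zero as $\|\theta_e\|\to\infty$ (whether $\eta\to 0^-$, $\eta\to-\infty$, or $|\theta_{e,j}|\to\infty$) give \textbf{(A4)}; and \textbf{(A5)}--\textbf{(A6)} follow from Gaussian tail bounds applied to $Y_{1}$ under $\bigP_{\theta^\star}$.

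For the prior I would take $\pi_{e}^{n}=\pi_{\eta}\otimes \pi_{\theta|\eta}^{n}$, where $\pi_{\eta}$ is a conjugate inverse-gamma prior on $\sigma^{2}=-1/(2\eta)$ (independent of $n$), and conditionally on $\eta$ the coordinates $\theta_{e,j}$ are independent centered Gaussians with variance $\tau_{n}^{2}$, to be chosen below. This yields a tractable mixture: integrating out the $\theta_{e,j}$ first (Gaussian integrals) produces an explicit expression whose exponent, for each fixed $x_{1:n}$, depends on the $k$ group sums $\sum_{i\in I_{j}}Y_{i}$, the total sum of squares $\sum_{i=1}^{n}Y_{i}^{2}$, and the $n_{j}$; integrating then over $\eta$ reduces to a one-dimensional inverse-gamma-like integral. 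On the other hand the sup of $\log\prod g_{\theta_{e,x_{i}},\eta}(Y_{i})$ over $\theta_{e}\in\Theta_{e}$ is attained at the group sample means $\theta_{e,j}^{\mathrm{ml}}/(-2\eta^{\mathrm{ml}})=\overline{Y}_{j}=n_{j}^{-1}\sum_{i\in I_{j}}Y_{i}$ together with the pooled residual variance $\sigma^{2,\mathrm{ml}}=n^{-1}\sum_{j,i\in I_{j}}(Y_{i}-\overline{Y}_{j})^{2}$ (when all $n_{j}\ge 1$), and has a closed form up to an additive constant.

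The key estimate is a uniform Laplace-type bound for the log-ratio of the sup to the mixture. Following the lines of \cite{CGG09}, the Gaussian integration over the means contributes $(k/2)\log(1+n\tau_{n}^{2}/(k\sigma^{2,\mathrm{ml}}))+ (k/2\tau_{n}^{2})\cdot\|\theta_{e}^{\mathrm{ml}}\|^{2}$-type terms, while integrating over $\eta$ contributes an additional $(1/2)\log n$ plus a constant, using a standard Laplace expansion around $\eta^{\mathrm{ml}}$. Choosing $\tau_{n}^{2}$ of order $\log n$ and combining, one obtains
\begin{equation*}
\sup_{\theta_{e}\in\Theta_{e}}\sup_{x_{1:n}}\Bigl[\log\prod_{i=1}^{n}g_{\theta_{e,x_{i}},\eta}(Y_{i}) - \log \bigKT_{e}^{n}(Y_{1:n}|x_{1:n})\Bigr]\le \Bigl(\frac{k+1}{2}+\epsilon\Bigr)\log n
\end{equation*}
$\bigP_{\theta^{\star}}$-eventually almost surely, for any $\epsilon>0$. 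Hence (\ref{hypoprior}) holds with any $b>(k+1)/2$, so $\alpha>k+3$ gives $\alpha>2(b+1)$ and Theorem \ref{theo strog consistency} delivers the conclusion.

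The hard part is making the bound uniform over $x_{1:n}\in\bigX^{n}$, because for atypical labelings some group $I_{j}$ may be very small or empty, the MLE means diverge, and the pooled ML variance $\sigma^{2,\mathrm{ml}}(x_{1:n})$ can degenerate. One therefore needs, eventually $\bigP_{\theta^\star}$-a.s., a lower bound on the pooled residual variance uniform in $x_{1:n}$ (which follows because $\sigma^{2,\mathrm{ml}}(x_{1:n})\ge \sigma^{2,\mathrm{ml}}(X_{1:n})\cdot(\min_{j} n_{j}(X_{1:n})/n)$ type arguments fail, so one must argue directly that the inner-most residual variance is bounded below by a deterministic positive constant derived from $\sigma_\star^{2}$ via Hoeffding/Bernstein inequalities and the ergodic theorem), together with a polynomial-in-$n$ upper bound on the group means $\overline{Y}_{j}$ (from Gaussian tails and a union bound over the at most $k^{n}$ partitions, which is tame once absorbed into $\log$). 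Ensuring these two uniform controls is the technical heart; once they are in place, the Laplace bound on the $\eta$-integral is clean and the rest is a copy of the known-variance proof with one extra dimension.
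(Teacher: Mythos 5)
Your overall strategy coincides with the paper's: verify \textbf{(A3)}--\textbf{(A6)}, build a conjugate prior on $(\eta,\theta_{e,1},\ldots,\theta_{e,k})$ (your hierarchical normal--inverse-gamma construction is, up to reparametrization, the same joint conjugate exponential prior the paper writes down with parameters $\alpha_{1}^{n},\alpha_{2,j}^{n},\beta_{j}^{n}$), obtain (\ref{hypoprior}) with $b$ arbitrarily close to $(k+1)/2$ so that $\alpha>k+3$ suffices, and invoke Theorem \ref{theo strog consistency}. You also correctly locate the technical heart: the bound must be uniform over all $k^{n}$ labelings $x_{1:n}$, which requires a deterministic, eventually-a.s.\ lower bound on the pooled residual variance $\widehat{\sigma}_{x_{1:n}}^{2}$ and an upper bound on the group means.

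The genuine gap is that you identify this obstacle but do not resolve it. You observe that naive arguments fail and propose to ``argue directly\ldots via Hoeffding/Bernstein inequalities and the ergodic theorem,'' but a direct union bound over the $k^{n}$ labelings is exactly what cannot work (the same cardinality problem you flag for the means), and no concentration inequality applied labeling-by-labeling will give a uniform lower bound on $\inf_{x_{1:n}}\widehat{\sigma}_{x_{1:n}}^{2}$. The paper's key idea, which is absent from your plan, is structural rather than probabilistic: since a step of the \emph{k-means} algorithm can only decrease $\widehat{\sigma}_{x_{1:n}}^{2}$ and produces a Vorono\"{\i} (hence interval) clustering of the real-valued $Y_{1:n}$, the minimizing labeling is itself a clustering by intervals, so $\inf_{x_{1:n}}\widehat{\sigma}_{x_{1:n}}^{2}=\inf_{I_{1},\ldots,I_{k}}\widehat{\sigma}_{I_{1},\ldots,I_{k}}^{2}$ over partitions of $\bigR$ into $k$ intervals. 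This collapses the exponential-size index set to a low-complexity class of interval partitions, over which a bracketing/Glivenko--Cantelli argument (Lemmas \ref{lemsinf} and \ref{lemunif}) gives uniform convergence to $\sigma_{I_{1},\ldots,I_{k}}^{2}$, whose infimum is shown to be strictly positive. Without this reduction (or a substitute of comparable force) your Laplace bound on the $\eta$-integral cannot be made uniform in $x_{1:n}$ and the proof does not close. A minor additional point: the group means need no union bound at all, since $|\widehat{m}_{x_{1:n},j}|\leq|Y|_{(n)}$ deterministically and $|Y|_{(n)}^{2}\leq 5\sigma_{\star}^{2}\log n$ eventually a.s.\ (Lemma \ref{lemmax}); a union bound over $k^{n}$ partitions, as you suggest, would not be ``tame.''
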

\begin{proof}
We shall prove that Theorem \ref{theo strog consistency} applies. First, it is easy to see that Assumptions {\bf (A4)} to {\bf (A6)} hold and the proof of {\bf (A3)} can be found in \cite{yakowitz}.\\ 
Define now the  conjugate exponential prior on $\Theta_{e}$ : 
\begin{equation*}
\pi_{e}^{n}(d\theta_{e}) = \exp\bigg[ \alpha_{1}^{n} \eta + \sum\limits_{j=1}^{k} \alpha_{2,j}^{n} \theta_{e,j}
 - \sum\limits_{j=1}^{k} \beta_{j} ^{n} A\left( \eta,\theta_{e,j}\right) - B\left( \alpha_{1}^{n},\alpha_{2,1}^{n},\ldots, \alpha_{2,k}^{n},\beta_{1}^{n},\ldots, \beta_{k}^{n}  \right)\bigg] 
 d\eta d\theta_{e,1}\cdots  d\theta_{e,k}
\end{equation*}
where the parameters $\alpha_{1}^{n}, \ (\alpha_{2,j}^{n})_{j=1,...,k}$ and $(\beta_{j}^{n})_{j=1,...,k}$ will be chosen later, and the normalizing constant may be computed as
\begin{equation*}
\exp \left\{ B\left( \alpha_{1}^{n},\alpha_{2,1}^{n},\ldots, \alpha_{2,k}^{n},\beta_{1}^{n},\ldots, \beta_{k}^{n}   \right)\right\} = \dfrac{2^{k+\frac{\sum_{j=1}^{k} \beta_{j}^{n}}{2}} \pi^{\frac{k}{2}} \Gamma\left( \frac{\sum_{j=1}^{k} \beta_{j}^{n} + k + 2}{2} \right)}{\left(\prod_{j=1}^{k} \sqrt{\beta_{j}^{n}}\right) \left( \alpha_{1} ^{n}- \sum_{j=1}^{k}  \frac{(\alpha_{2,j}^{n})^{2}}{\beta_{j}^{n}} \right)^{\frac{\sum_{j=1}^{k} \beta_{j}^{n} + k + 2}{2}}}
\end{equation*}
where we recall the Gamma function: $\Gamma (z) = \int_{0}^{+\infty} u^{z-1} e ^{-u} du$ for any complex number $z$.
Theorem \ref{theorem gauss} follows now from Theorem \ref{theo strog consistency} and the proposition below.
\end{proof}
\begin{pr}\label{propo KT gaussian unnown var}
 If {\bf{(A1)}} holds, 
it is possible to choose the parameters $\alpha_{1}^{n}, \ (\alpha_{2,j}^{n})_{j=1,...,k}$ and $(\beta_{j}^{n})_{j=1,...,k}$
such that for any $\epsilon>0$,
\begin{equation*}
\max\limits_{x_{1:n}}\bigg\{\sup_{\theta_{e}\in\Theta_{e}}\prod_{i=1}^{n}g_{\theta_{e,x_{i}},\eta}\left(Y_{i}\right) - \log  \bigKT_{e}^{n}(Y_{1:n}|x_{1:n})\bigg\} \le \dfrac{k+1+\epsilon}{2} \log n 
\end{equation*}
 $\bigP_{\theta^{\star}}$ - eventually a.s.
\end{pr}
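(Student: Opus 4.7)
The plan is to exploit the conjugate exponential family structure of the Gaussian emission model. Because $\pi_e^n$ is conjugate with respect to the natural statistics $\bigl(\sum_i Y_i^2,\,\sum_{i\in I_j}Y_i,\,n_j\bigr)_j$, the integral defining the mixture is explicit:
\begin{equation*}
  \log\bigKT_e^n(Y_{1:n}|x_{1:n}) = -\tfrac{n}{2}\log(2\pi) + B\bigl(\tilde\alpha_1^n,(\tilde\alpha_{2,j}^n)_j,(\tilde\beta_j^n)_j\bigr) - B\bigl(\alpha_1^n,(\alpha_{2,j}^n)_j,(\beta_j^n)_j\bigr),
\end{equation*}
with updated hyperparameters $\tilde\alpha_1^n = \alpha_1^n + \sum_i Y_i^2$, $\tilde\alpha_{2,j}^n = \alpha_{2,j}^n + \sum_{i\in I_j}Y_i$ and $\tilde\beta_j^n = \beta_j^n + n_j$. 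On the maximum likelihood side, the MLE is $\hat\mu_j=\bar Y_j$, $\hat\sigma^2 = W/n$ with $W=\sum_j\sum_{i\in I_j}(Y_i-\bar Y_j)^2$, so that $\sup_{\theta_e}\log\prod_i g_{\theta_{e,x_i},\eta}(Y_i) = -\tfrac{n}{2}\log(2\pi\hat\sigma^2)-\tfrac{n}{2}$.

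I would then take $\alpha_{2,j}^n = 0$, $\alpha_1^n = 1$ and $\beta_j^n \equiv \beta^n := 1/\log n$. This choice is engineered so that the posterior quadratic correction
\begin{equation*}
  R_n \;:=\; \beta^n \sum_{j} \tfrac{n_j\bar Y_j^2}{\beta^n+n_j} \;\le\; k\beta^n \max_{1\le i\le n}Y_i^2
\end{equation*}
stays $O(1)$ (thanks to the Gaussian tail estimate $\max_i Y_i^2 = O(\log n)$), while the compensating normalizing term $-\tfrac{k}{2}\log\beta^n = \tfrac{k}{2}\log\log n$ contributed by $B(\alpha^n)$ remains $o(\log n)$. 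Using the identity $\sum_i Y_i^2 - \sum_j(\sum_{i\in I_j}Y_i)^2/n_j = W$ one checks that $\tilde\alpha_1^n - \sum_j (\tilde\alpha_{2,j}^n)^2/\tilde\beta_j^n = 1 + n\hat\sigma^2 + R_n$. Substituting into $B(\tilde\alpha^n)-B(\alpha^n)$, expanding the two $\Gamma$ factors by Stirling, and collecting the cancellations between the $(n/2)\log n$, $(n/2)\log\hat\sigma^2$ and $(n/2)\log 2$ contributions, the log-ratio reduces to
\begin{equation*}
  \sup_{\theta_e}\log\prod_i g_{\theta_{e,x_i},\eta}(Y_i) - \log\bigKT_e^n \;=\; \tfrac{1}{2}\log n + \tfrac{1}{2}\sum_j\log\!\bigl(1+\tfrac{n_j}{\beta^n}\bigr) + \tfrac{k+2}{2}\log\hat\sigma^2 + \tfrac{n+k+2}{2}\log\!\bigl(1+\tfrac{1+R_n}{n\hat\sigma^2}\bigr) + O(1).
\end{equation*}

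It remains to bound each summand uniformly in $x_{1:n}$, $\bigP_{\theta^\star}$-eventually almost surely. Ergodicity from (A1) and the Gaussian nature of the emissions yield (i) $\max_{1\le i\le n}|Y_i|^2 = O(\log n)$ a.s., so $R_n = O(1)$; (ii) $\hat\sigma^2 \le \tfrac{1}{n}\sum_i Y_i^2 \le C$ a.s., uniformly in $x_{1:n}$; and most importantly (iii) $\inf_{x_{1:n}}\hat\sigma^2 \ge c^\star > 0$ eventually a.s. The main obstacle is precisely (iii): recognizing $\inf_{x_{1:n}}(W/n)$ as the empirical $k$-means risk, a uniform law of large numbers over the class $\{y\mapsto\min_j(y-\mu_j)^2:\mu\in\bigR^k\}$ (restricted to bounded $\mu$, which is harmless) forces this infimum to converge a.s.\ to the population $k$-means risk $\inf_{\mu_1,\ldots,\mu_k} E_{\theta^\star}[\min_j(Y_1-\mu_j)^2]$, which is strictly positive because the marginal law of $Y_1$ under $\bigP_{\theta^\star}$ is absolutely continuous. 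Combined with the trivial $\sum_j\log(1+n_j/\beta^n) \le k\log n + k\log\log n + O(1)$ and the $O(1/n)$ bound $(1+R_n)/(n\hat\sigma^2) = O(1/n)$ on the last logarithm, the three correction terms are altogether $O(\log\log n)$, giving a log-ratio bounded by $\tfrac{k+1}{2}\log n + O(\log\log n)$, hence by $\tfrac{k+1+\epsilon}{2}\log n$ for any $\epsilon>0$ and $n$ large enough, uniformly in $x_{1:n}$.
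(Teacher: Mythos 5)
Your proposal is correct and follows essentially the same route as the paper: exploit conjugacy to write $\log\bigKT_{e}^{n}(Y_{1:n}|x_{1:n})$ as a difference of $B$-functions, apply Stirling to the Gamma ratio, and reduce everything to an almost-sure control of $\max_{i\le n}Y_i^2$ (which is $O(\log n)$ by Gaussian tails, the paper's Lemma~\ref{lemmax}) and a uniform positive lower bound on $\inf_{x_{1:n}}\widehat{\sigma}^2_{x_{1:n}}$; your hyperparameter choices ($\beta^n=1/\log n$, $\alpha_{2,j}^n=0$) differ from the paper's ($\beta_j^n=1/n$, $\alpha_{2,j}^n=\sqrt{\beta_j^n}$) but serve the identical purpose and your bookkeeping of the $\tfrac12\log n$, $\tfrac{k}{2}\log n$ and $O(\log\log n)$ contributions checks out. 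The one point where you genuinely diverge is the crucial lower bound (iii): the paper argues via a $k$-means fixed-point observation that the minimizing assignment $x_{1:n}$ must be a clustering by intervals, and then proves a bespoke bracketing uniform law of large numbers over interval partitions (Lemmas~\ref{lemsinf} and~\ref{lemunif}); you instead lower-bound the within-cluster sum of squares by the empirical $k$-means risk $\inf_{\mu}\frac1n\sum_i\min_j(Y_i-\mu_j)^2$ and invoke Pollard-style $k$-means consistency over the class of min-of-quadratics. Both work under the stationarity and ergodicity furnished by {\bf(A1)}; your version leans on a standard external result and is slightly slicker, while the paper's is self-contained, and in both cases the reduction to a compact parameter set (bounded centers for you, the interval-partition reduction for the paper) is the step that deserves the most care and is only sketched.
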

\begin{proof}
For any $x_{1:n}\in\bigX^{n}$,
the parameters $\left(\widehat{ \eta},(\widehat{\theta}_{e,j})_{j}\right)$ maximizing the conditional likelihood are given by 
$$\widehat{\eta} = -\frac{1}{2\widehat{\sigma}_{x_{1:n}}^{2}},\;
\widehat{\theta}_{e,j} = \frac{\widehat{m}_{x_{1:n},j}} {\widehat{\sigma}_{x_{1:n}}^{2}}
$$
with
$$
\widehat{m}_{x_{1:n},j}= \frac{\sum_{i\in I_{j}} Y_{i}}{n_{j}},\;
\widehat{\sigma}_{x_{1:n}}^{2}  = \frac{1}{n}\sum_{j=1}^{k} \sum_{i \in I_{j}} \left(Y_{i}- \hat{m}_{x_{1:n},j} \right)^{2} 
$$
so that
$$
\log\prod_{i=1}^{n}g_{\theta_{e,x_{i}},\eta}\left(Y_{i}\right) \leq -n\log \widehat{\sigma}_{x_{1:n}}-\frac{n}{2}\log 2\pi - \frac{n}{2}
.
$$
Also, 
\begin{equation*}
\bigKT_{e}^{n}(y_{1:n}|x_{1:n})= \frac{1}{\sqrt{2\pi}^{n}} \exp \Bigg[B\bigg(\alpha_{1}^{n}+\sum\limits_{i=1}^{n} Y_{i}^{2},
 (\alpha_{2,j}^{n}+\sum\limits_{i \in I_{j}} Y_{i})_{1\leq j \leq k} ,(\beta_{j}^{n}+n_{j})_{1\leq j=1\leq k}  \bigg)
-B\left(\alpha_{1}^{n},(\alpha_{2,j}^{n})_{1\leq j \leq k} ,(\beta_{j}^{n})_{1\leq j \leq k}  \right) \Bigg].
\end{equation*}
Recall that  for all $z>0$ (see for instance \cite{WhiWa}) 
$$
\sqrt{2 \pi} e^{-z} z^{z-\frac{1}{2}} \le \Gamma (z) \le \sqrt{2 \pi} e^{-z + \frac{1}{12z} } z^{z-\frac{1}{2}}  
$$
so that one gets that, for any $x_{1:n}\in\bigX^{n}$ and any $\theta_{e}\in\Theta_{e}$,
\begin{eqnarray*}
\log\prod_{i=1}^{n} g_{\theta_{e,x_{i}},\eta} \left(Y_{i}\right) - \log \bigKT_{e}^{n}(y_{1:n}|x_{1:n}) &\leq& o(\log n) -\frac{n}{2} \log  \widehat{\sigma}_{x_{1:n}}^{2}  - \frac{n}{2}\left(1 + \log 2\right)+   \frac{k}{2}\log \left( \frac{n + \sum_{j = 1}^{k} \beta_{j} ^{n}}{k}\right) \\ 
& & -\bigg[-\frac{n+\sum_{j = 1}^{k} \beta_{j}^{n}+k+2}{2}\\
& & \quad \quad +\left( \frac{n + \sum_{j = 1}^{k} \beta_{j}^{n} +k+1}{2}\right)\log \frac{n+\sum_{j = 1}^{k} \beta_{j}^{n}+k+2}{2}\bigg]\\
& & + \frac{ n + \sum_{j = 1}^{k} \beta_{j}^{n} + k+2}{2} \log \Bigg(  \alpha_{1}^{n} + \sum_{i=1}^{n} Y_{i}^{2}
-  \sum\limits_{j=1}^{k} \frac{\left( \alpha_{2,j}^{n} + \sum_{i\in I_{j}} Y_{i}\right)^{2}}{n_{j}+\beta_{j}^{n}} \Bigg)
\end{eqnarray*}
Choose now
\begin{equation}
\label{equapara}
\beta_{j}^{n}=\alpha_{2,j}^{n}=\frac{1}{n},\;\alpha_{2,j}^{n}=\sqrt{\beta_{j}^{n}},
\;j=1,\ldots,k,\;
\alpha_{1}^{n}=k+1.
\end{equation}
Then one easily gets that for any $x_{1:n}\in\bigX^{n}$ and any $\theta_{e}\in\Theta_{e}$,
\begin{eqnarray*}
\log\prod_{i=1}^{n}g_{\theta_{e,x_{i}},\eta}\left(Y_{i}\right) & - &\log \bigKT_{e}^{n}(Y_{1:n}|x_{1:n}) \leq o(\log n)\\ &+&\frac{n + \sum_{j = 1}^{k} \beta_{j}^{n} + k + 2}{2} \log \Bigg(1 + \frac{1}{n\widehat{\sigma}_{x_{1:n}}^{2}} \bigg[ k + 1   \quad + \sum_{j=1}^{k} \bigg\{\widehat{m}_{x_{1:n},j}^{2}\left(n_{j}- \frac{n_{j}^{2}}{n_{j}+ 1/n} \right)\\
& &  -2\frac{n_{j}}{n.n_{j}+1}\widehat{m}_{x_{1:n},j} -\frac{1}{n^{2} n_{j} + n} \bigg\} \bigg] \Bigg)+\frac{k+1}{2} \log n + \frac{k/n + k + 2 }{2} \log \widehat{\sigma}_{x_{1:n}}^{2}\\
\end{eqnarray*}
Let now $|Y|_{(n)}=\max_{1\leq i \leq n} |Y_{i}|$. Then  for any $x_{1:n}\in\bigX^{n}$,
$$
 \widehat{\sigma}_{x_{1:n}}^{2} \leq |Y|_{(n)}^{2}\;{\text{and}}\;|\widehat{m}_{x_{1:n},j}| \leq |Y|_{(n)},\;j=1,\ldots ,k.
$$
Also,
for any  partition $(I_{i},\ldots,I_{k})$ of $\bigR$ in $k$ intervals, define :  
 $$
\widehat{ \sigma}_{I_{i},\ldots,I_{k}}^{2}=  \frac{1}{n}\sum\limits_{j=1}^{k} \sum\limits_{i=1}^{n} \one_{Y_{i}\in I_{j}} \left(Y_{i} - \frac{\sum\limits_{i'=1}^{n} \one_{Y_{i'} \in I_{j}} Y_{i'}}{\sum\limits_{i'=1}^{n} \one_{Y_{i'} \in I_{j}}} \right)^{2}
 $$
and  
\[\sigma_{I_{i},\ldots,I_{k}}^{2}=  \sum\limits_{j=1}^{k} \bigP_{\theta^{\star}}(Y_{1} \in I_{k}) Var_{\theta^{\star}}(Y_{1}|Y_{1} \in I_{k})  
 \] 
 where $ Var_{\theta^{\star}}(Y_{1}|Y_{1} \in I_{k}) $ is the conditional variance of $Y_{1}$ given that $Y_{1} \in I_{k}$.
 The \textit{k-means} algorithm, see \cite{McQueen}, \cite{inaba}, allows to find a local minimum of the function $ x_{1:n} \longrightarrow \widehat{\sigma}_{x_{1:n}}^{2}$ starting with any initial configuration $x_{1:n}$. Each step of the algorithm produces an assignment of the values $Y_{1:n}$ in $k$ clusters (by partitioning the observations according to the Vorono\"{i} diagram generated by the means of each cluster). Here, the values $Y_{1:n}$ being real numbers, a Vorono\"{i} diagram clustering on $\bigR$ is nothing else than a clustering by intervals. Because the \textit{k-means} algorithm converges, in a finite time, to a local minimum of the quantity $ x_{1:n} \longrightarrow \widehat{\sigma}_{x_{1:n}}^{2}$, if the initial configuration is the $x^{0}_{1:n}$ that minimizes $\widehat{\sigma}_{x_{1:n}}^{2}$, the $k-means$ algorithm will lead to the same configuration  $x^{0}_{1:n}$. Thus, the minimum of  $\widehat{\sigma}_{x_{1:n}}^{2}$ is  a clustering by intervals, that is
 $$
 \inf_{x_{1:n}\in\bigX^{n}}\widehat{\sigma}_{x_{1:n}}^{2}=\inf_{I_{i},\ldots,I_{k}}\widehat{ \sigma}_{I_{i},\ldots,I_{k}}^{2}
 $$
  where the infimum is over  all   partitions  of $\bigR$  in $k$ intervals.\\
  We now get:
\begin{eqnarray*}
& &\log\prod_{i=1}^{n}g_{\theta_{e,x_{i}},\eta} \left(Y_{i}\right)- \log \bigKT_{e}^{n}(Y_{1:n}|x_{1:n}) \leq o(\log n)\\
 & &+\frac{n + \sum\limits_{j = 1}^{k} \beta_{j}^{n} + k + 2}{2} \log \Bigg(1 + \frac{1}{n  \inf\limits_{I_{i},\ldots,I_{k}}\widehat{ \sigma}_{I_{i},\ldots,I_{k}}^{2}}\bigg[ k + 1 + \sum\limits_{j=1}^{k} \Big[|Y|_{(n)}^{2}\left(n_{j}- \frac{n_{j}^{2}}{n_{j}+ 1/n} \right) +2\frac{n_{j}}{n.n_{j}+1}|Y|{(n)} \Big] \bigg] \Bigg)\\
 & & +\frac{k+1}{2} \log n + \frac{k/n + k + 2 }{2} \log |Y|_{(n)}^{2}\\
\end{eqnarray*}
 and 
 Proposition \ref{propo KT gaussian unnown var} follows from the choice (\ref{equapara}) and the lemmas below, whose proofs are given  in the Appendix.
 \end{proof}
 \begin{lem}
 \label{lemsinf}
 If {\bf{(A1)}} holds,\\ 
 $
 \sup_{I_{i},\ldots,I_{k}} \left \vert  \widehat{ \sigma}_{I_{i},\ldots,I_{k}}^{2}-\sigma_{I_{i},\ldots,I_{k}}^{2} \right \vert
 $
 converges to $0$ as $n$ tends to infinity  $\bigP_{\theta^{\star}}$ - a.s. (Here the supremum is over  all   partitions  of $\bigR$  in $k$ intervals).
Also,
 the infimum $s_{\inf}$ of $\sigma_{I_{i},\ldots,I_{k}}^{2}$ over all   partitions  of $\bigR$  in $k$ intervals satisfies $s_{\inf}>0$.
 \end{lem}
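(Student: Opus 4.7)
My plan is to treat the two assertions separately.

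For the uniform convergence, I would first express both $\widehat{\sigma}^{2}_{I_{1},\ldots,I_{k}}$ and $\sigma^{2}_{I_{1},\ldots,I_{k}}$ as smooth functions of three ``moment-accumulation'' curves indexed by a single real parameter,
\[
\widehat{F}_{n}^{(p)}(t) = \frac{1}{n}\sum_{i=1}^{n} Y_{i}^{p} \one_{Y_{i}\leq t}, \qquad F^{(p)}(t) = E_{\theta^{\star}}\!\left[Y_{1}^{p}\one_{Y_{1}\leq t}\right], \qquad p=0,1,2,
\]
since a partition into $k$ intervals is encoded by $k-1$ cut points and each sum of $Y_{i}^{p}$ over an interval equals a difference of two evaluations of $\widehat{F}_{n}^{(p)}$. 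By assumption {\bf{(A1)}} the chain $([X_{n-d(\tau^{\star})+1},\ldots,X_{n}])_{n}$ is stationary and irreducible, hence ergodic, and therefore $(Y_{n})$ is a stationary ergodic sequence. The pointwise ergodic theorem then yields $\widehat{F}_{n}^{(p)}(t)\to F^{(p)}(t)$, $\bigP_{\theta^{\star}}$-almost surely for each fixed $t$ (integrability coming from the Gaussian moments). To lift this to uniform convergence, I would use a classical chaining argument: since $Y_{1}$ has a continuous density, $t\mapsto F^{(p)}(t)$ is continuous and, for any $\varepsilon>0$, one may select a finite grid $t_{1}<\cdots <t_{N}$ whose $F^{(p)}$-increments are all $\leq\varepsilon$; the interior oscillation of $\widehat{F}_{n}^{(p)}$ is then dominated by the grid-point oscillations of $\widehat{F}_{n}^{(0)}$ and $\widehat{F}_{n}^{(2)}$ (via monotonicity in $t$ for $p=0,2$ and uniform integrability for $p=1$), which vanish a.s. Once uniform convergence of the three curves is in hand, a continuity argument on the mapping from $(t_{1},\ldots,t_{k-1})$ to $\widehat\sigma^{2}$ transfers to the uniform bound over partitions.

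For $s_{\inf}>0$, I would use compactness. Partitions of $\bigR$ into $k$ intervals are parametrized by cut points in the compact set $K=\{(t_{1},\ldots,t_{k-1})\in[-\infty,+\infty]^{k-1}\colon t_{1}\leq \cdots \leq t_{k-1}\}$. Writing
\[
\sigma^{2}_{I_{1},\ldots,I_{k}} = \sum_{j=1}^{k} \bigl(a_{j} - b_{j}^{2}/c_{j}\bigr),\quad a_{j} = E_{\theta^{\star}}[Y_{1}^{2}\one_{Y_{1}\in I_{j}}],\ b_{j} = E_{\theta^{\star}}[Y_{1}\one_{Y_{1}\in I_{j}}],\ c_{j} = \bigP_{\theta^{\star}}(Y_{1}\in I_{j}),
\]
with the convention $b_{j}^{2}/c_{j}:=0$ whenever $c_{j}=0$ (justified by the Cauchy--Schwarz bound $b_{j}^{2}\leq a_{j}c_{j}$, which makes the $j$-th term vanish continuously as $c_{j}\to 0$), the map $(t_{1},\ldots,t_{k-1})\mapsto \sigma^{2}_{I_{1},\ldots,I_{k}}$ is continuous on $K$. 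The infimum is therefore attained at some partition. Since $\sum_{j}c_{j}=1$, at least one $c_{j}$ is positive; the marginal law of $Y_{1}$ under $\bigP_{\theta^{\star}}$ is a mixture of non-degenerate Gaussians, hence has a density that is strictly positive everywhere on $\bigR$, so conditionally on lying in any interval of positive probability, $Y_{1}$ is non-degenerate and its conditional variance is strictly positive. Thus every partition gives $\sigma^{2}_{I_{1},\ldots,I_{k}}>0$, whence $s_{\inf}>0$.

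The main obstacle will be the uniform ergodic convergence in step (i): the classical Glivenko--Cantelli theorem is stated for i.i.d.\ samples, whereas here $(Y_{n})$ is only stationary ergodic with Markovian dependence inherited from the hidden VLMC. The chaining sketch above bypasses the need for independence by using only pointwise ergodic convergence on a finite deterministic grid, combined with the monotonicity in $t$ of $\widehat{F}_{n}^{(0)}$ and $\widehat{F}_{n}^{(2)}$ and the uniform integrability of the bracketing increments of $\widehat{F}_{n}^{(1)}$.
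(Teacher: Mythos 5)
Your overall strategy matches the paper's: reduce the uniform convergence to a uniform ergodic law of large numbers for the classes $\{x\mapsto x^{a}\one_{I}(x)\}$, $a=0,1,2$, over intervals $I$ (proved by a bracketing/grid argument, exactly as in the paper's auxiliary Lemma~\ref{lemunif}), and obtain $s_{\inf}>0$ by attainment of the infimum plus positivity of the conditional variance on an interval of positive mass (the paper instead reduces to a single interval with $\bigP_{\theta^{\star}}(Y_{1}\in I)\geq 1/k$ and parametrizes by the quantile function, but the content is the same). That second part of your argument is sound, and you correctly avoid the trap of deducing $s_{\inf}>0$ from pointwise positivity alone.

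The gap is in the sentence ``once uniform convergence of the three curves is in hand, a continuity argument \ldots\ transfers to the uniform bound over partitions.'' The functional $(a,b,c)\mapsto a-b^{2}/c$ producing each within-cluster term is not uniformly continuous near $c=0$, and uniform convergence of $(\widehat a_{j},\widehat b_{j},\widehat c_{j})$ to $(a_{j},b_{j},c_{j})$ plus mere continuity of this map does not yield uniform convergence of $\widehat a_{j}-\widehat b_{j}^{2}/\widehat c_{j}$ over all partitions: a partition may contain an interval whose empirical or true probability is arbitrarily small, and there the ratio is a priori out of control. One must split according to whether $E[\one_{I}(Y_{1})]\leq\varepsilon$ or $>\varepsilon$. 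In the first case the Cauchy--Schwarz inequalities $\bigl(\tfrac1n\sum_{i} Y_{i}\one_{I}(Y_{i})\bigr)^{2}\leq\bigl(\tfrac1n\sum_{i} Y_{i}^{2}\one_{I}(Y_{i})\bigr)\bigl(\tfrac1n\sum_{i}\one_{I}(Y_{i})\bigr)$ and $E[Y_{1}^{2}\one_{I}(Y_{1})]\leq\sqrt{E[Y_{1}^{4}]}\,\sqrt{E[\one_{I}(Y_{1})]}$ bound both the empirical and the population ratio terms by $O(\sqrt{\varepsilon})$, up to the uniformly vanishing error on $\tfrac1n\sum_{i} Y_{i}^{2}\one_{I}(Y_{i})$; in the second case the denominators are bounded below by $\varepsilon$ (eventually, uniformly, thanks to the $a=0$ uniform law) and the algebraic transfer is legitimate. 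This is precisely the case split the paper carries out, and you already invoke the needed Cauchy--Schwarz bound for the population functional in your compactness step, so the repair is short --- but as written the ``continuity'' step does not constitute a proof. A smaller point of the same nature: for $p=1$ the summand $x\one_{s<x\leq t}$ is not of one sign, so the grid must be refined (e.g.\ by inserting $0$, as the paper does) before the monotonicity argument applies.
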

 \begin{lem}
 \label{lemmax}
 If {\bf{(A1)}} holds,  $\bigP_{\theta^{\star}}$ - eventually a.s. ,
 $
|Y|_{(n)}^{2}\leq 5 \sigma_{\star}^{2} \log n
 $.
 \end{lem}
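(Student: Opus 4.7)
The plan is to apply the Borel--Cantelli lemma to the events
$A_{n}=\{|Y|_{(n)}^{2}>5\sigma_{\star}^{2}\log n\}$, so the work reduces to showing that $\sum_{n\geq 1}\bigP_{\theta^{\star}}(A_{n})<\infty$. By a union bound,
\[
\bigP_{\theta^{\star}}(A_{n})\;\leq\;n\cdot\bigP_{\theta^{\star}}\Bigl(|Y_{1}|>\sqrt{5\sigma_{\star}^{2}\log n}\Bigr),
\]
so everything rests on a sub-Gaussian tail bound on a single observation.

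Conditionally on $X_{1}=x$, the observation $Y_{1}$ is Gaussian with mean $m_{x}^{\star}$ and variance $\sigma_{\star}^{2}$, where $m_{x}^{\star}$ is determined by $\theta_{e,x}^{\star}$ and $\eta^{\star}$. Set $M=\max_{x\in\bigX}|m_{x}^{\star}|$, which is finite because $\bigX$ is finite. Using the standard Gaussian tail inequality $\bigP(|Z|>u)\leq 2e^{-u^{2}/(2\sigma_{\star}^{2})}$ for $Z\sim\mathcal{N}(0,\sigma_{\star}^{2})$, I condition on $X_{1}$ and obtain, for every $t>M$,
\[
\bigP_{\theta^{\star}}(|Y_{1}|>t)\;\leq\;2\exp\!\left(-\frac{(t-M)^{2}}{2\sigma_{\star}^{2}}\right),
\]
after averaging over the law of $X_{1}$ under $\bigP_{\theta^{\star}}$ (the stationary distribution from {\bf(A1)}).

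Now I substitute $t_{n}=\sqrt{5\sigma_{\star}^{2}\log n}$. Expanding,
\[
\frac{(t_{n}-M)^{2}}{2\sigma_{\star}^{2}}
=\frac{5}{2}\log n\;-\;\frac{M\sqrt{5\sigma_{\star}^{2}\log n}}{\sigma_{\star}^{2}}\;+\;\frac{M^{2}}{2\sigma_{\star}^{2}}
=\frac{5}{2}\log n\;+\;o(\log n),
\]
which gives $n\cdot\bigP_{\theta^{\star}}(|Y_{1}|>t_{n})\leq 2\,n^{1-5/2+o(1)}=2\,n^{-3/2+o(1)}$. This is summable, so $\sum_{n}\bigP_{\theta^{\star}}(A_{n})<\infty$, and Borel--Cantelli yields $\bigP_{\theta^{\star}}(A_{n}\text{ infinitely often})=0$, i.e.\ $|Y|_{(n)}^{2}\leq 5\sigma_{\star}^{2}\log n$ eventually a.s.

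No step is genuinely hard; the only thing to be careful about is to condition on the hidden state before invoking the Gaussian tail bound (since the unconditional law of $Y_{1}$ is a finite Gaussian mixture, not a single Gaussian) and to verify that the constant $5$ in the bound is strictly larger than the critical value $2$ needed to beat the $\log n$ coming from the union bound---which is exactly what the computation above confirms.
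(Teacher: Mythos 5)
Your proof is correct and follows essentially the same route as the paper's: condition on the hidden states so that each $Y_i$ is Gaussian with mean bounded by $M=\max_x|m^{\star}_x|$ and variance $\sigma_\star^2$, apply a Gaussian tail bound to the maximum of the $n$ observations (the paper writes this as $1-[\bigP(U^2\le\cdot)]^n$ rather than your union bound $n\,\bigP(|Y_1|>t_n)$, which is an equivalent first-moment argument), and conclude via Borel--Cantelli from the resulting $O(n^{-3/2+o(1)})$ bound. The one point worth making explicit is that the union bound over $i$ uses the stationarity of the chain so that all $Y_i$ share the marginal law of $Y_1$, but this is assumed in the paper's setup.
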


\section{Algorithm and simulations}
\label{sec:algo}

In this section we first present our practical algorithm. We then apply it in the case of Gaussian emissions with unknown common variance and compare our estimator with the BIC estimator that is when we choose in (\ref{tau}) the BIC penalty $pen(n,\tau) = \frac{k-1}{2}|\tau|\log n$.

\subsection{Algorithm}

We start this section with the definition of the terms used below :

\begin{itemize}
\item
A maximal node of a complete tree $\tau$ is a string $u$ such that, for any $x$ in $\bigX$, $ux$ belongs to $\tau$. We denote by $N(\tau)$ the set of maximal nodes in the tree $\tau$.
\item
The score of a complete tree $\tau$ on the basis of the observation $(Y_{1},\ldots,Y_{n})$  is the penalized maximum likelihood associated with $\tau$ :
\begin{equation}
\label{score} sc(\tau) = -\sup_{\theta \in \Theta_{\tau}} \log g_{\theta}(Y_{1:n}) + pen( n, \tau) \end{equation}
\end{itemize}
We also require that the emission model belongs to an exponential family such that :\\[0.2cm]
(i) There exists $D \in \bigN^{\star}$, a function $s: \bigX \times \bigY \longrightarrow \bigR^{D}$ of sufficient statistic and functions $h: \bigX \times \bigY \longrightarrow \bigR$, $\psi: \Theta_{e} \longrightarrow \bigR^{D}$, and $A : \Theta_{e} \longrightarrow \bigR$, such that the emission density can be written as : \\[0.2cm]
 \[g_{\theta_{e,x},\eta}(y) = h(x,y)\exp \left[ \left< \psi(\theta_{e}), s(x,y)\right> - A(\theta_{e})\right]\]
where $\left< .,.\right>$ denotes the scalar product in $\bigR^{D}$. \\[0.2cm]
(ii) For all $S\in \bigR^{D}$, the equation : 
\[\nabla_{\theta_{e}} \psi(\theta_ {e})S - \nabla_{\theta_{e}} A(\theta_{e})=0\]
where $\nabla_{\theta_{e}} $ denotes the gradient, has a unique solution denoted by $\bar{\theta}_{e}(S)$.\\[0.4cm]
Assumption (ii) states that the function $\bar{\theta}_{e} : S \in \bigR^{D} \rightarrow \bar{\theta}_{e}(S) \in \Theta_{e}$ that returns the complete data
maximum likelihood estimator corresponding to any feasible value of the sufficient statistics is available in closed-form.\\[0.2cm]
The key idea of our algorithm is a "bottom to the top" pruning technique. Starting from the maximal complete tree of depth $M = \lfloor \log n \rfloor$, denoted by $\tau_{M}$, we change each maximal node into a leaf whenever the resulting tree decreases the score.\\[0.2cm]
We then need to compute  the maximum likelihood of any complete tree subtree of $\tau_{M}$. We start the algorithm by running several iterations of the EM algorithm. During this preliminary step we build estimators of sufficient statistics. These statistics will be used later in the computation of the maximum likelihood estimator $\hat{\theta}_{\tau} \in \Theta_{\tau}$ which realizes the supremum in (\ref{score}) for any complete context tree $\tau$ subtree of $\tau_{M}$.\\
For any $n\ge 0$, we denote by $W_{n}$ the vectorial random sequence $W_{n} = \left(X_{n-M+1},\ldots,X_{n} \right)$. For $n$ big enough, $M\ge d(\tau^{\star})$ and $(W_{n})_{n}$ is a Markov chain. The intermediate quantity (see \cite{CMR05}) needed in the EM algorithm for the HMM $(W_{n},Y_{n})$ can be written as:\\
for any $(\theta, \theta')$ in $\Theta_{\tau_{M}}$ :
\begin{eqnarray*}
Q_{\theta,\theta'}&=& E_{\theta'} (\log(g_{\theta}(W_{1:n},Y_{1:n})) \big|Y_{1:n} )\\
 &=&  E_{\theta'} ( \nu(W_{1}) \big|Y_{1:n} )  + \sum\limits_{i = 1}^{n-1}  E_{\theta'} ( \log P_{\theta_{t}}(W_{i} , W_{i+1}) \big|Y_{1:n} )  \\
 &+&  \sum\limits_{i = 1}^{n}  E_{\theta'} ( \log g_{\theta_{e,W_{i,M},\eta}}(Y_{i}) \big|Y_{1:n} ).  
\end{eqnarray*}\\[0.2cm]
Notice, for any $\theta \in \Theta_{\tau_{M}}$, if $(w,w') \in (\bigX^{M})^{2}$ are such that $ w_{2:M} \neq w'_{1:M-1}$, then  $P_{\theta_{t}}(w,w') =  0$. \\[0.2cm]
For any $w \in \bigX^{M}$ and any $w' \in \bigX^{M}$ if we denote by 
\begin{eqnarray*}
& &\forall \ i=1,\ldots,n , \ \Phi_{i|n}^{\theta'} (w ) = P_{\theta'}(W_{i} = w|Y_{1:n}),\\
& &\forall \ i=1,\ldots, n-1 , \ \Phi_{i:i+1|n}^{\theta'} (w,w' ) = P_{\theta'}(W_{i} = w,W_{i+1} = w' |Y_{1:n}),
 \end{eqnarray*}
and
\begin{eqnarray*}
 S^{\theta'}_{t,n} &=&\left(\dfrac{ \left(\sum\limits_{i = 1}^{n-1} \Phi_{i:i+1|n}^{\theta'}(w,w')\right)}{n} \right)_{(w,w') \in \bigX^M}\\[0.4cm]
S^{\theta'}_{e,n} &= & \dfrac{1}{n}\sum\limits_{x \in \bigX}\sum\limits_{i = 1}^{n} \left(\sum\limits_{w\in \bigX^M  |  w_{M} = x} \Phi_{i|n}^{\theta'} (w) \right)  s(x,Y_{i})
\end{eqnarray*}
then there exists a function $C$ such that :
\begin{equation}
\label{qte intermed}
\dfrac{1}{n}Q_{\theta,\theta'} = \dfrac{1}{n} C(\theta',Y_{1:n}) + \left< S^{\theta'}_{t,n}, \log P_{\theta_{t}}\right> +\left< S^{\theta'}_{e,n},\psi(\theta_{e})\right>-A(\theta_{e}) .
\end{equation}

If, for some complete tree $\tau$, we restrict $\theta_{t}$ in $\Theta_{t,\tau}$, then for any $s$ in $\tau$, for any $ w $ in $ \bigX^M$ such that $s$ is postfix of $w$, for any $x$ in $\bigX$, we have $P_{\theta_{t}}\left(w,(w_{2:M} x)\right) = P_{s,x}(\theta_{t})$.\\[0.4cm]
Thus, the vector $P_{s,.}$ maximising this equation is solution of the Lagrangian, \\[0.2cm]
$ \left\{\begin{array}{lll}
\dfrac{\delta}{\delta P_{s,x} } & \left[ \dfrac{1}{n}Q_{\theta,\theta'} + \Lambda(\sum\limits_{x' \in \bigX}P_{s,x'} -1 ) \right] & =0 \ , \ \forall x \in \bigX\\[0.2cm]
\dfrac{\delta}{\delta \Lambda } & \left[ \dfrac{1}{n}Q_{\theta,\theta'} + \Lambda(\sum\limits_{x' \in \bigX}P_{s,x'} -1 ) \right] &  =0
\end{array}\right.$\\[0.2cm]
and, finally, the estimator of $\theta_{t} \in \Theta_{t,\tau}$ maximising the quantity $Q(\theta',.)$ only depends on the sufficient statistic $S^{\theta'}_{t,n}$ and is given by : 
\begin{equation} \label{P chapeau s}
 \bar{P}_{s,x}(S^{\theta'}_{t,n}) =  \dfrac{  \sum\limits_{w \in \bigX^M |\ s \  \mathrm{postfix} \ of \ w } S^{\theta'}_{t,n} (w,(w_{2:M}x))}{ \sum\limits_{x' \in \bigX } \quad \sum\limits_{w\in \bigX^M |\ s \  \mathrm{postfix} \ of \ w }S^{\theta'}_{t,n}( w,(w_{2:M}x'))} .
\end{equation}
\begin{algorithm}[h]
\caption{Preliminary computation of the sufficient statistics}
\label{algosuffstat}
\begin{algorithmic}[1]
\REQUIRE $\theta_{0} = (\theta_{t,0},\theta_{e,0}) \in \Theta_{\tau_{M}}$ be an initial value for the parameter $\theta$.
\REQUIRE Let $t_{EM}$ be a threshold.
\STATE $stop=0$
\STATE $i=0$
\WHILE{($stop = 0$)}
\STATE $i=i+1$
\STATE \underline{M step} : compute the quantities $S^{\theta_{i-1}}_{t,n}$ and $S^{\theta_{i-1}}_{e,n}$ 
\STATE \underline{E step} : set $$\theta_{i} = \left( \left(\bar{P}_{w,x}(S^{\theta_{i-1}}_{t,n})\right)_{w,x} \ , \  \bar{\theta}_{e}(S^{\theta_{i-1}}_{e,n})\right)$$
\IF{($||\theta_{i} - \theta_{i-1}|| < t_{Em}$)}
\STATE $stop = 1$
\ENDIF
\ENDWHILE
\STATE \underline{M step} : compute the quantities $S^{\theta_{i}}_{t,n}$ and $S^{\theta_{i}}_{e,n}$ 
\STATE $S_{t} = S^{\theta_{i}}_{t,n}$ and $S_{e} = S^{\theta_{i}}_{e,n}$ 
\RETURN $(S_{t}, S_{e})$
\end{algorithmic}
\end{algorithm}

While Algorithm \ref{algosuffstat} computes the sufficient statistics $S_{t}$ and $S_{e}$ on the basis of the observations $(Y_{k})_{k \in \left\{ 1,\ldots,n\right\} }$, Algorithm \ref{algoprun} is our pruning Algorithm. This algorithm begins with the estimation of the exhaustive statistics calling Algorithm \ref{algosuffstat}. As Algorithm \ref{algosuffstat} is prone to the convergence towards local maxima, we set our initial parameter value $\theta_{0}$ after running a preliminary \textit{k-means}  algorithm (see \cite{McQueen}, \cite{inaba}): we assign the values $Y_{1:n}$ into $k$ clusters which produces a sequence of "clusters" $\tilde{X}_{1:n}$. A first estimation of the emission parameters is then possible using this clustering, the initial transition parameter $\theta_{0,t} = \left(P^{0}_{w,i}\right)_{w \in \bigX^{M}, \ i \in \bigX}$ is also computed on the basis of the sequence $\tilde{X}_{1:n}$ using the relation : 
 \begin{equation*}\forall w \in \bigX^{M}, \ \forall x \in \bigX,  P^{0}_{w,x} =  \frac{\sum\limits_{i=1}^{n-M} \one_{ \tilde{X}_{i:i+M-1} =w}\one_{\tilde{X}_{i+M}=x} }{\sum\limits_{i=1}^{n-M} \one_{ \tilde{X}_{i:i+M-1} =w}}   \ . \end{equation*}
Then, starting with the initialisation $\tau = \tau_{M}$, we consider, one after the other, the maximal nodes $u$ of $\tau$. We build a new tree $\tau_{\text{test}}$ by taking  out of $\tau$ all the contexts $s$ having $u$ as postfix and adding $u$ as a new context: $\tau_{test} = \tau \setminus \left\{ ux \big|ux \in \tau, \ x \in \bigX \right\} \bigcup \left\{u \right\}  $. Let $\hat{\theta}_{\text{test}} = \left(( (\bar{P}_{s,x}(S_{t}))_{s \in \tau_{\text{test}}, x \in \bigX} , \bar{\theta}_{e}(S_{e})\right)$ which, hopefully, becomes an acceptable proxy for $ \argmax\limits_{\theta \in \Theta_{\tau_{test}}} \log g_{\theta}(Y_{1:n})$. Let $- \log g_{\hat{\theta}_{\text{test}}}(Y_{1:n}) + pen( n, \tau_{\text{test}})$ be an approximation of the score of the context tree $\tau_{\text{test}}$ still denoted by $sc(\tau_{\text{test}}) $, then, if $sc(\tau_{\text{test}}) <sc(\tau)$, we set $\tau = \tau_{\text{test}}$. In Algorithm \ref{algoprun}, the role of $\tau_{2}$ is to insure that all the branches of $\tau$ are tested before shortening again a branch already tested.

\begin{algorithm}[h]
\caption{Bottom to the top pruning algorithm}
\label{algoprun}
\begin{algorithmic}[1]
\REQUIRE Let $t_{EM}$ a threshold.
\STATE Compute $(S_{t}, S_{e})$ with  Algorithm \ref{algosuffstat} with the $t_{EM}$ threshold.
\STATE $\hat{\theta} = \left(\left(\bar{P}_{w,x}(S_{t})\right)_{w \in \tau_{M}, x \in \bigX},\bar{\theta}_{e}(S_{e}) \right) $
\STATE\textit{Pruning procedure} :
\STATE $\tau=\tau_{2}=\tau_{M}$
\STATE $change=YES$
\WHILE {($change = YES$ AND $|\tau|\ge1$)}
\STATE $change=NO$
	\FOR {($u \in N(\tau)$)}
	\IF{($u \in N(\tau_{2})$)}
		\STATE $L_{u}(\tau_{2}) =\left\{s \in \tau_{2} | u \ postfix \ of \ s\right\}$
		\STATE $\tau_{\text{test}} =\left[ \tau_{2} \setminus L_{u}(\tau_{2}) \right] \bigcup \left\{ u \right\}$
		\STATE $ \hat{\theta}_{\text{test}} = \left(\left(\bar{P}_{s,x}(S_{t})\right)_{s\in \tau, x \in \bigX},\bar{\theta}_{e}(S_{e}) \right) $ 
		
		\IF {($sc(\tau_{\text{test}}) < sc(\tau_{2}) $)}
			
			\STATE $\tau_{2}=\tau_{\text{test}}$
			\STATE $\hat{\theta} = \hat{\theta}_{\text{test}}$
			\STATE $change =YES$
		\ENDIF
	\ENDIF	
	\ENDFOR
\STATE $\tau = \tau_{2}$	
\ENDWHILE
\RETURN $\tau$
\end{algorithmic}
\end{algorithm}

\subsection{Simulations}

We propose to illustrate the a.s convergence of $\hat{\tau}_{n}$ using Algorithm \ref{algoprun} in the case of Gaussian emission with unknown variance. We set $k=2$, and use as minimal complete context tree one of the two complete trees represented in Figure \ref{taustar1} and Figure \ref{taustar2}. The true transitions probabilities associated with each trees are indicated in boxes under each context.
\begin{figure}[!h]
\begin{center}
\includegraphics[width=0.45\textwidth]{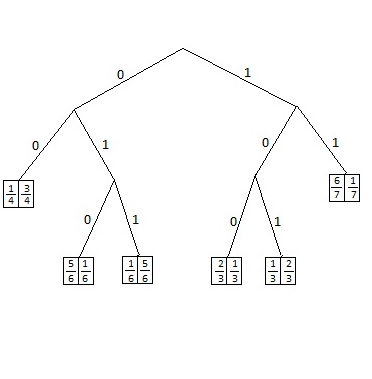}
\end{center}
\caption{Graphic representation of the complete context tree $\tau^{\star}_{1}$ with transition probabilities indicated in the box under each leaf $s$: $P_{s,0}^{\star} \ | \ P_{s,1}^{\star}$ } 
\label{taustar1} 
\end{figure}
\begin{figure}[!h]
\begin{center}
\includegraphics[width=0.35\textwidth]{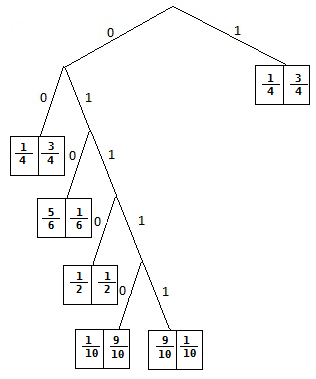}
\end{center}
\caption{Graphic representation of the complete context tree $\tau^{\star}_{2}$ with transition probabilities indicated in the box under each leaf $s$: $P_{s,0}^{\star} \ | \ P_{s,1}^{\star}$ } 
\label{taustar2} 
\end{figure}

For each tree $\tau^{\star}_{1}$ and $\tau^{\star}_{2}$, we will simulate 3 samples of the VLHMM, choosing as true emission parameters $m_{0}^{\star} =0$, $\sigma^{2,\star} = 1$ and $m_{1}^{\star}$ varying in $\{2,3,4\}$. In the preliminary EM steps, we use as threshold $t_{EM}=0.001$\\[0.2cm]
\begin{table}[!h]
\begin{center}
\begin{tabular}{|c|c|c|c||c|c|c|}
\hline
 \multicolumn{7}{|c|}{ $\tau^{\star} =\tau^{\star}_{1}$, $|\tau^{\star}|=6$   }        \\
\hline
 $ $ & \multicolumn{3}{|c||}{Penalty (\ref{pen})}  & \multicolumn{3}{c|}{BIC penalty }\\
 
\cline{2-7}
\textbf{$n/m_{1}^{\star}$} & $2$ & $3$ & $4$ & $2$ & $3$ & $4$\\
\hline
\hline
100 & 2 & 2 & 2   & 2 & 3 & 3 \\
\hline
1000 & 2 & 2 & 2  & 7 & \textbf{6} &\textbf{6} \\
\hline
2000 & 2 & 2 & 4  & \textbf{6} & \textbf{6} & \textbf{6}\\
\hline
5000 & 2 & 4 & 4  &  7 &\textbf{6} & \textbf{6}\\
\hline
10000 & 4 & \textbf{6} & \textbf{6}  &  7 & \textbf{6} &\textbf{6}\\
\hline
20000 & 5 & \textbf{6}& \textbf{6}  &\textbf{6} & \textbf{6} & \textbf{6}\\
\hline
30000 & 5 & \textbf{6} & \textbf{6}  & \textbf{6} & \textbf{6}& \textbf{6}\\
\hline
40000 & \textbf{6}& \textbf{6} & \textbf{6}  & 7 & \textbf{6} & \textbf{6}\\
\hline
50000 & \textbf{6} & \textbf{6} & \textbf{6}  & 7 & \textbf{6} & \textbf{6}\\
\hline
\end{tabular} 
\end{center}
\caption{Case $\tau^{\star} =\tau_{1}^{\star}$. Comparison of $|\hat{\tau}_{n}|$ between our estimator and the BIC estimator for different values of $n$ and $m_{1}^{\star}$.}
\label{cardtau1}
\end{table}

\begin{table}[!h]
\begin{center}
\begin{tabular}{|c|c|c|c||c|c|c|}
\hline
 \multicolumn{7}{|c|}{ $\tau^{\star} =\tau^{\star}_{1}$, $|\tau^{\star}|=6$   }        \\
\hline
 $ $ & \multicolumn{3}{|c||}{Penalty (\ref{pen})}  & \multicolumn{3}{c|}{BIC penalty }\\
 
\cline{2-7}
\textbf{$n/m_{1}^{\star}$} & $2$ & $3$ & $4$ & $2$ & $3$ & $4$\\
\hline
\hline
100 & -202 & -202 &  -190 & -6  & -6  &  2 \\
\hline
1000 & -235 & -213&-155 & 4   & -2    & 25 \\
\hline
2000 & -221& -129 &  -88 & 8 & -4  & 4\\
\hline
5000 & -144 & -36 &  -20 &  5   & -4   &  -5\\
\hline
10000 & -75 & -5 &  -4  & 4   & -5    & -4\\
\hline
20000 & -6& -4 & -4 &  10    & -4    &  -4\\
\hline
30000 & 21 & -5 &  -4 & 10    & -5    &  -4\\
\hline
40000 & 12 & -4 &   -3 & 10     & -4   &   -3\\
\hline
50000 & 12 & -7 &   -4  & 10     & -4    & -4\\
\hline
\end{tabular} 
\end{center}
\caption{Case $\tau^{\star} =\tau_{1}^{\star}$. Score difference $sc(\hat{\tau}_{n})-sc(\tau^{\star} )$.}
\label{scoretau1}
\end{table}

\begin{table}[!h]
\begin{center}
\begin{tabular}{|c|c|c|c||c|c|c|}
\hline
 \multicolumn{7}{|c|}{ $\tau^{\star} =\tau^{\star}_{2}$, $|\tau^{\star}|=6$   }        \\
\hline
 $ $ & \multicolumn{3}{|c||}{Penalty (\ref{pen})}  & \multicolumn{3}{c|}{BIC penalty }\\
 
\cline{2-7}
\textbf{$n/m_{1}^{\star}$} & $2$ & $3$ & $4$ & $2$ & $3$ & $4$\\
\hline
\hline
100 & 2 & 2 & 2   & 2 & 2& 2 \\
\hline
1000 & 2 & 2 & 2  & 3 & \textbf{6} &\textbf{6} \\
\hline
2000 & 2 & 2 & 2  & \textbf{6} & \textbf{6} & \textbf{6}\\
\hline
5000 & 2 & 3 & 3  & \textbf{6} & \textbf{6} & \textbf{6}\\
\hline
10000 & 3 & 3 & 3  & \textbf{6} & \textbf{6} & \textbf{6}\\
\hline
20000 & 3 & 3 & \textbf{6} &\textbf{6} & \textbf{6} & \textbf{6}\\
\hline
30000 & 3 & 3 & \textbf{6}   & \textbf{6} & \textbf{6}& \textbf{6}\\
\hline
40000 &3 & \textbf{6} & \textbf{6}   & \textbf{6} & \textbf{6} & \textbf{6}\\
\hline
50000 & 3 & \textbf{6} & \textbf{6}    & \textbf{6} & \textbf{6} & \textbf{6}\\
\hline
\end{tabular} 
\end{center}
\caption{Case $\tau^{\star} =\tau_{2}^{\star}$. Comparison of $|\hat{\tau}_{n}|$ between our estimator and the BIC estimator for different values of $n$ and $m_{1}^{\star}$.}
\label{cardtau2}
\end{table}

\begin{table}[!h]
\begin{center}
\begin{tabular}{|c|c|c|c||c|c|c|}
\hline
 \multicolumn{7}{|c|}{ $\tau^{\star} =\tau^{\star}_{2}$, $|\tau^{\star}|=6$   }        \\
\hline
 $ $ & \multicolumn{3}{|c||}{Penalty (\ref{pen})}  & \multicolumn{3}{c|}{BIC penalty }\\
\cline{2-7}
\textbf{$n/m_{1}$} & $2$ & $3$ & $4$ & $2$ & $3$ & $4$\\
\hline
\hline
100 & -201 &  -202 &   -195  &  -10  &  -6   &   1 \\
\hline
1000 & -266 &  -246 &-229  & 5   &   -1   &  -2 \\
\hline
2000 & -272 & -239 &    67 & 4   &  -1   &  324\\
\hline
5000 & -272 & -200&  -151  &  2  &  -2    &  -5\\
\hline
10000 & -242 & -128 &  -52  &  6   & -2    &   -4\\
\hline
20000 &   -227 & 12&  -6&   6   &  -6  &   -6\\
\hline
30000 &  -191 & 141 &  -6  &7    &  -5   & -6\\
\hline
40000 &  -159 &  -6 &   -8 &  8    &-6   &    -8\\
\hline
50000 &   -136 &  -6 &  -9  &  7     & -6  &   -8\\
\hline
\end{tabular} 
\end{center}
\caption{Case $\tau^{\star} =\tau_{2}^{\star}$. Score difference $sc(\hat{\tau}_{n})-sc(\tau^{\star} )$.}
\label{scoretau2}
\end{table}

The results of our simulations are summarized in Tables \ref{cardtau1} to \ref{scoretau2}. The size of the estimated tree $|\hat{\tau}_{n}|$ for different values of $n$ and $m_{1}^{\star}$ are noticed in  Table \ref{cardtau1} when $\tau^{\star} = \tau^{\star}_{1}$ (resp. in the table Figure \ref{cardtau2} when $\tau^{\star} = \tau^{\star}_{2}$) for the two choices of penalties $ pen_{\alpha}(n,\tau) = \sum\limits_{t=1}^{|\tau|} \dfrac{(k-1)t + \alpha}{2} \log n$ with $\alpha = 5.1$ and $ pen(n,\tau) =\dfrac{k-1}{2}|\tau| \log n $. The first important remark we make regarding Tables \ref{cardtau1} and  \ref{cardtau2} is that, on each simulation and whatever the penalty we used, when $|\hat{\tau}_{n}|=|\tau^{\star}|$ we also had $\hat{\tau}_{n}=\tau^{\star}$, in the same way, each time $|\hat{\tau}_{n}|<|\tau^{\star}|$ (resp. $|\hat{\tau}_{n}|>|\tau^{\star}|$  ), $\hat{\tau}_{n}$ was a subtree of  $\tau^{\star}$ (resp. $\tau^{\star}$ was a subtree of $\hat{\tau}_{n}$). 
For any combination of $\tau^{\star}$ and $m_{1}^{\star}$, both estimators seem to converge, except our estimator in the case $\tau^{\star}=\tau_{2}^{\star}$ and $m_{1}^{\star}=2$, where 50 000 measures is not enough to reach the convergence. However, for small samples, smaller models are systematically chosen with our estimator, while the BIC estimator is reaching the right model for relatively small samples. This behaviour of our estimator shows that our penalty is too heavy.

The score differences $ sc(\hat{\tau}_{n})- sc(\tau^{\star})$ Table \ref{scoretau1} when $\tau^{\star} = \tau^{\star}_{1}$ and Table \ref{scoretau2} when $\tau^{\star} = \tau^{\star}_{2}$ are the differences between the score of $\hat{\tau}_{n}$ computed with the estimated parameter $\hat{\theta}_{n}$ and the score of $\tau^{\star}$ computed with the the real parameters. These informations allow us to know when the estimators $\hat{\tau}_{n},\hat{\theta}_{n}$ are well estimated by Algorithm \ref{algoprun}. Indeed, when $\hat{\tau}_{n} \neq \tau^{\star}$, if the score of  $\tau^{\star}$ computed with the real transition and emission parameters is smaller than the score of our estimator with estimated parameters (non negative score difference), then the estimator given by Algorithm \ref{algoprun} is not the expected estimator defined by (\ref{tau}). In particular, Table \ref{scoretau1} shows that the over estimation of the BIC estimator in the case $m_{1}^{\star} = 2$ (Table \ref{scoretau1}) can be due to a local minima problem: Algorithm \ref{algoprun} selected a tree $\tau$ such that $|\tau|>|\tau^{\star }|$ whereas $\tau^{\star }$ had a smaller score. This problem might occur because we use an EM type algorithm which often leads to local minima. Although we try to take an initial value of the parameters in a neighbourhood of the real ones using the preliminary k-means algorithm, this problem persists. Extra EM loops for each tested tree in Algorithm \ref{algoprun} could also provide a better estimation of the parameters and then improve the score estimation for each tested tree, but it would also increase the complexity of the algorithm.

Finally, we observe that bigger the quantity $\vert m_{0}^{\star}-m_{1}^{\star} \vert$ is, quicker the convergence of our estimator or BIC estimator occurs. This phenomenon can be easily understood as very different emission distributions for different states leads to an easier estimation of the underlying state sequence on the basis of the observations and allows us to build a more precise description of the VLMC behaviour.

\section{Conclusion}

In this paper, we were interested in the statistical analysis of Variable Length Hidden Markov Models (VLHMM). 
We have presented such models then we estimated the context tree of the hidden process using penalized maximum likelihood. We have shown how to choose the penalty so that the estimator is strongly consistent without any prior upper bound on the depth or on the size of the context tree of the hidden process. We have proved that our general consistency theorem applies when the emission distributions are Gaussian with unknown means and the same unknown variance. We have proposed a pruning algorithm and have applied it to simulated data sets. This illustrates the consistency of our estimator, but also suggests that smaller penalty could lead to consistent estimation.\\
Finding the minimal penalty insuring the strong consistency of the estimator with no prior upper bound remains unsolved. A similar problem has been solved by R. van Handel \cite{Han09} to estimate the order of finite state Markov chains, and by E. Gassiat and R. van Handel \cite{HanGas} to estimate the number of populations in a mixture with i.i.d. observations. The basic idea is that the maximum likelihood behaves as the maximum of approximate chi-square variables, and that the behavior of the maximum likelihood statistic may be investigated using empirical process theory tools to obtain a $\log \log n$ rate of growth. However, it is known for HMM that the maximum likelihood does not behave this way and converges weakly to infinity, see  \cite{EGCK00}. 
We did by-pass the problem by using information theoretic inequalities, but
understanding the pathwise fluctuations of the likelihood in HMM models remains a difficult problem to be solved.

\appendices
\label{sec:proofs}
\section{Proof of Lemma \ref{lemsinf} }

For any  partition $(I_{i},\ldots,I_{k})$ of $\bigR$ in $k$ intervals,
\begin{eqnarray*}
 \sigma_{I_{i},\ldots,I_{k}}^{2}&=&\sum\limits_{j=1}^{k} \bigP_{\theta^{\star}}(Y_{1} \in I_{k}) Var_{\theta^{\star}}(Y_{1}|Y_{1} \in I_{k}) \\
  &\geq& \frac{1}{k} \inf_{I:\bigP_{\theta^{\star}}(Y \in I)\geq \frac{1}{k}}  Var_{\theta^{\star}}(Y_{1}|Y_{1} \in I)
\end{eqnarray*}

where the infimum is over all intervals $I$ of $\bigR$.
The distribution of $Y_{1}$ is the Gaussian mixture with density $g^{\star}=\sum\limits_{x \in \bigX} \pi^{\star}(x) \phi_{m_{x}^{\star},\sigma_{\star}^{2}}$,
 where $\pi^{\star}$ is the stationary distribution of $(X_{n})_{n\geq 0}$ and $\phi_{m_{x}^{\star},\sigma_{\star}^{2}}$ is the density of the normal distribution with mean $m_{x}^{\star}$ and variance $\sigma_{\star}^{2}$. The repartition function $F^{\star}$ of the distribution of $Y_{1}$ is continuous and increasing, with continuous and increasing inverse quantile function. Thus,
 $$
 \inf_{_{I_{i},\ldots,I_{k}}} \sigma_{I_{i},\ldots,I_{k}}^{2} \geq \inf_{ \overset{-\infty \leq a < b \leq +\infty :}{ F^{\star}(a)+\frac{1}{k}\leq F^{\star}(b)}}
 Var_{\theta^{\star}}(Y_{1}|Y_{1} \in ]a,b[).
 $$
But $ Var_{\theta^{\star}}(Y_{1}|Y_{1} \in ]a,b[)$ is a continuous function of $(a,b)$, and the infimum at the righ-hand side of the inequality is attained at some $(\overline{a},\overline{b})$ (eventually infinite) such that $F^{\star}(\overline{a})+\frac{1}{k}\leq F^{\star}(\overline{b})$. Thus
$ Var_{\theta^{\star}}(Y_{1}|Y_{1} \in ]\overline{a},\overline{b}[)>0$, and $s_{inf}>0$.\\
For any  partition $(I_{i},\ldots,I_{k})$ of $\bigR$ in $k$ intervals,
\begin{equation*}
\hat{\sigma}_{I_{i},\ldots,I_{k}}^{2}(Y_{1:n}) -  \sigma_{I_{i},\ldots,I_{k}}^{2}= \frac{1}{n} \sum\limits_{i=1}^{n}Y_{i}^{2} -E(Y_{1}^{2}) - \sum\limits_{j=1}^{k} \left(  \frac{(\sum\limits_{i = 1}^{n} Y_{i} \one_{I_{j}} (Y_{i}))^{2}}{n^{2}} \frac{n}{\sum\limits_{i = 1}^{n}  \one_{I_{j}}(Y_{i})} - \frac{E(Y\one_{I_{j}} (Y_{1}))^{2}}{E(\one_{I_{j}}(Y_{1}))}\right)
\end{equation*}
 so that 
\begin{equation*}
 \sup\limits_{I_{1},\ldots,I_{k}} \left\vert \hat{\sigma}_{I_{i},\ldots,I_{k}}^{2}(Y_{1:n}) -  \sigma_{I_{i},\ldots,I_{k}}^{2}\right\vert  \leq \frac{1}{n} \left\vert \sum\limits_{i=1}^{n}Y_{i}^{2} -E(Y_{1}^{2}) \right\vert
 + k\sup\limits_{I \ \text{interval of}\ \bigR}\Bigg\vert \dfrac{(\sum\limits_{i = 1}^{n} Y_{i} \one_{I} (Y_{i}))^{2}}{n^{2}} \frac{n}{\sum\limits_{i = 1}^{n}  \one_{I}(Y_{i})} - 
 \frac{E(Y_{1}\one_{I} (Y_{1}))^{2}}{E(\one_{I}(Y_{1}))}\Bigg\vert.
\end{equation*}
Using \cite{L92}, $(Y_{n})_{n\geq 0}$ is a stationary ergodic process, so that
$\frac{1}{n} \sum\limits_{i=1}^{n}Y_{i}^{2} -E(Y_{1}^{2})$ tends to 0 $\bigP_{\theta^{\star}}$ a.s. 
Let $\epsilon >0$. We now consider separately  the intervals $I$ such that $E(\one_{I}(Y)) \le \epsilon$ or $E(\one_{I}(Y)) > \epsilon$.\\
$ \bullet$ Let $I $ be such that $E(\one_{I}(Y_{1})) \le \epsilon$.\\
Using Cauchy Schwarz inequality,
\begin{eqnarray*}
\left(\frac{1}{n} \sum Y_{i} \one_{I}(Y_{i})\right)^{2} &\le& \left(\frac{1}{n} \sum Y_{i}^{2} \one_{I}(Y_{i}) \right) \times \left(\frac{1}{n} \sum  \one_{I}(Y_{i})\right),\;\\
E\left(Y_{1}\one_{I}(Y_{1})\right)^{2} &\le& E\left(Y_{1}^{2}\one_{I}(Y_{1})\right) E\left(\one_{I}(Y_{1})\right)\;
\end{eqnarray*}
and,
\begin{equation*}
E\left(Y_{1}^{2} \one_{I}(Y_{1})\right) \le \sqrt{E(Y_{1}^{4})} \sqrt{E(\one_{I}(Y_{1}))} \leq M \sqrt{\epsilon}
\end{equation*}
for some fixed positive constant $M$.
Thus,
\begin{eqnarray*}
& &\left| \frac{(\sum_{i=1}^{n} Y_{i} \one_{I} (Y_{i}))^{2}}{n^{2}} \frac{n}{\sum_{i=1}^{n} \one_{I}(Y_{i})} - \frac{E(Y_{1}\one_{I} (Y_{1}))^{2}}{E(\one_{I}(Y_{1}))}\right|\\
&\leq &\frac{1}{n} \sum_{i=1}^{n} Y_{i}^{2} \one_{I}(Y_{i}) +  E(Y_{1}^{2}\one(Y_{1})) \\
& \leq & \left|\frac{1}{n} \sum_{i=1}^{n} Y_{i}^{2} \one_{I}(Y_{i})- E(Y_{1}^{2}\one(Y_{1}))\right| + 2 E(Y_{1}^{2}\one(Y_{1}))\\
& \leq & \left|\frac{1}{n} \sum_{i=1}^{n} Y_{i}^{2} \one_{I}(Y_{i})- E(Y_{1}^{2}\one(Y_{1}))\right| +2M\sqrt{\epsilon}.
\end{eqnarray*}
$ \bullet$ Let now $I $ be such that $E(\one_{I}(Y_{1})) > \epsilon$.\\
\begin{eqnarray*}
& & \left| \frac{(\sum_{i=1}^{n} Y_{i} \one_{I} (Y_{i}))^{2}}{n^{2}} \frac{n}{\sum_{i=1}^{n}  \one_{I}(Y_{i})} - \frac{E(Y_{1}\one_{I} (Y_{1}))^{2}}{E(\one_{I}(Y_{1}))}\right|\\
&=&\left|\frac{ \sum_{i=1}^{n}  Y_{i} \one_{I} (Y_{i})}{n} \frac{1}{\sqrt{\dfrac{\sum_{i=1}^{n}  \one_{I} (Y_{i})}{n}}} -  \frac{E(Y_{1}\one_{I} (Y_{1}))}{\sqrt{E(\one_{I}(Y_{1}))}}\right|\\
&\times & \bigg|\frac{ \sum_{i=1}^{n}  Y_{i} \one_{I} (Y_{i})}{n} \frac{1}{\sqrt{\frac{\sum_{i=1}^{n}  \one_{I} (Y_{i})}{n}}} +  \frac{E(Y_{1}\one_{I} (Y_{1}))}{\sqrt{E(\one_{I}(Y_{1}))}}\bigg|\\
& \leq &  \Bigg[ \left|\frac{ \sum_{i=1}^{n}  Y_{i} \one_{I} (Y_{i})}{n}\right|\left|  \frac{1}{\sqrt{\frac{\sum_{i=1}^{n}  \one_{I} (Y_{i})}{n}}} - \frac{1}{\sqrt{E(\one_{I}(Y_{1}))}} \right|\\
&+ & \left|\frac{\frac{ \sum_{i=1}^{n}  Y_{i} \one_{I} (Y_{i})}{n}-E(Y_{1}\one_{I} (Y_{1}))}{\sqrt{E(\one_{I}(Y_{1}))}}\right| \Bigg] \\
&\times & \Bigg[ \left|\frac{ \sum_{i=1}^{n}  Y_{i} \one_{I} (Y_{i})}{n}\right| \left|  \frac{1}{\sqrt{\frac{\sum_{i=1}^{n}  \one_{I} (Y_{i})}{n}}}+ \frac{1}{\sqrt{E(\one_{I}(Y_{1}))}} \right| \\
&+& \left|\frac{\frac{ \sum_{i=1}^{n}  Y_{i} \one_{I} (Y_{i})}{n}-E(Y\one_{I} (Y_{1}))}{\sqrt{E(\one_{I}(Y_{1}))}}\right| \Bigg] \\
&\leq &  \Bigg[\bigg( \frac{\sum_{i=1}^{n}|Y_{i}|}{n} \bigg)\frac{ \left| \sqrt{\frac{\sum_{i=1}^{n}  \one_{I} (Y_{i})}{n}} -\sqrt{E(\one_{I}(Y_{1}))}\right|}{\epsilon} \\
&+&    \frac{\left|\frac{ \sum_{i=1}^{n}  Y_{i} \one_{I} (Y_{i})}{n} -E(Y_{1}\one_{I} (Y_{1}))\right|}{\sqrt{\epsilon}}\Bigg]\\
& \times & \left[ \frac{2}{\epsilon}\frac{\sum_{i=1}^{n}  |Y_{i}|}{n} + \frac{\left|\frac{ \sum_{i=1}^{n}  Y_{i} \one_{I} (Y_{i})}{n} -E(Y_{1}\one_{I} (Y_{1}))\right|}{\sqrt{\epsilon}}\right]
\end{eqnarray*}

Now, using Lemma \ref{lemunif} below, one gets that, for all positive $\epsilon$,
\begin{equation*}
\limsup_{n \to \infty} \sup\limits_{I \ \text{interval of}\  \bigR} \Bigg| \frac{E(Y_{1}\one_{I} (Y_{1}))^{2}}{E(\one_{I}(Y_{1}))}- \frac{(\sum_{i=1}^{n} Y_{i} \one_{I} (Y_{i}))^{2}}{n^{2}} \frac{n}{\sum_{i=1}^{n} \one_{I}(Y_{i})} \Bigg| 
\le 2M\sqrt{\epsilon} 
\end{equation*}
$\bigP_{\theta^{\star}}$-a.s.  so that
\begin{equation*}
\lim_{n \to \infty} \sup\limits_{I \ \text{interval of}\  \bigR} \Bigg| \frac{E(Y_{1}\one_{I} (Y_{1}))^{2}}{E(\one_{I}(Y_{1}))}- \frac{(\sum_{i=1}^{n} Y_{i} \one_{I} (Y_{i}))^{2}}{n^{2}} \frac{n}{\sum_{i=1}^{n} \one_{I}(Y_{i})} \Bigg| 
=0
\end{equation*}
$\bigP_{\theta^{\star}}$-a.s.  and the Lemma follows.
\begin{lem}
\label{lemunif}
$\sup_{I} \left|\frac{1}{n} \sum Y_{i}^{2} \one_{I}(Y_{i}) - E\left(Y_{1}^{2} \one_{I}(Y)\right) \right|$,   
$\sup_{I} \left|\frac{1}{n} \sum Y_{i} \one_{I}(Y_{i}) - E\left(Y_{1} \one_{I}(Y_{1})\right)\right|$ and
$\sup_{I} \left|\frac{1}{n} \sum\one_{I}(Y_{i}) - E\left( \one_{I}(Y_{1}) \right)\right|$ (where the supremum is over all intervals $I$ in $\bigR$) tend to $0$ as $n$ tends to infinity,  $\bigP_{\theta^{\star}}$ a.s.
\end{lem}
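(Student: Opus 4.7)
The plan is to reduce all three uniform convergence statements to the ergodic Glivenko--Cantelli theorem for the empirical distribution function of $(Y_n)_{n\ge 0}$, combined with a truncation argument to handle the unbounded test functions $y$ and $y^2$. Recall that by \cite{L92}, $(Y_n)_{n\ge 0}$ is stationary and ergodic under $\bigP_{\theta^{\star}}$. Its marginal distribution function $F^{\star}(t)=\bigP_{\theta^{\star}}(Y_1\le t)$ is a mixture of Gaussian c.d.f.'s, hence continuous and strictly increasing on $\bigR$. Applying the pointwise ergodic theorem at every rational $t$ and using monotonicity together with continuity of $F^{\star}$ yields the ergodic Glivenko--Cantelli conclusion $\sup_{t\in\bigR} |F_n(t)-F^{\star}(t)| \to 0$ a.s., where $F_n(t) = \frac{1}{n}\sum_{i=1}^n \one_{Y_i\le t}$. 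Since any interval $I$ of $\bigR$ is the difference of two half-lines, the third statement follows at once: $\sup_I |\frac{1}{n}\sum \one_I(Y_i) - E[\one_I(Y_1)]| \le 2 \sup_t |F_n(t)-F^{\star}(t)| \to 0$ a.s.

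For the first two statements, fix $M>0$ and write, for any interval $I$ and any function $\varphi(y)\in\{y,y^2\}$,
\begin{equation*}
\frac{1}{n}\sum_{i=1}^n \varphi(Y_i)\one_I(Y_i) - E[\varphi(Y_1)\one_I(Y_1)]
= A_n^M(I) + B_n^M(I),
\end{equation*}
where $A_n^M(I)$ is the contribution from $|Y_i|\le M$ and $B_n^M(I)$ from $|Y_i|>M$. The tail $|B_n^M(I)|$ is uniformly in $I$ bounded by $\frac{1}{n}\sum |\varphi(Y_i)|\one_{|Y_i|>M} + E[|\varphi(Y_1)|\one_{|Y_1|>M}]$. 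By the ergodic theorem this converges a.s.\ to $2E[|\varphi(Y_1)|\one_{|Y_1|>M}]$. Since $Y_1$ is a finite Gaussian mixture, $E[|\varphi(Y_1)|\one_{|Y_1|>M}]\to 0$ as $M\to\infty$, so the tail term can be made arbitrarily small uniformly in $I$ and $n$ large by choosing $M$ large.

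For the truncated part $A_n^M(I)$, approximate the function $y\mapsto \varphi(y)\one_{I\cap[-M,M]}(y)$ by a step function: partition $[-M,M]$ into $K$ subintervals $J_1,\ldots,J_K$ of length $2M/K$ with midpoints $y_k$. On $J_k$ one has $|\varphi(y)-\varphi(y_k)|\le \omega_M/K$, where $\omega_M=M$ for $\varphi(y)=y$ and $\omega_M = 2M^2$ for $\varphi(y)=y^2$. Therefore
\begin{equation*}
\left|A_n^M(I) - \sum_{k=1}^K \varphi(y_k)\left[\frac{1}{n}\sum_{i=1}^n \one_{I\cap J_k}(Y_i) - E[\one_{I\cap J_k}(Y_1)]\right]\right|
\le \frac{2\omega_M}{K}.
\end{equation*}
Each $I\cap J_k$ is again an interval, so by the uniform Glivenko--Cantelli statement proved above,
\begin{equation*}
\sup_I \left| \sum_{k=1}^K \varphi(y_k)\Big( \tfrac{1}{n}\sum_i \one_{I\cap J_k}(Y_i) - E[\one_{I\cap J_k}(Y_1)]\Big)\right|
\le 2K\max_k |\varphi(y_k)| \sup_t |F_n(t)-F^{\star}(t)| \xrightarrow[n\to\infty]{} 0
\end{equation*}
almost surely for each fixed $M,K$. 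Combining, $\limsup_n \sup_I |A_n^M(I)| \le 2\omega_M/K$ a.s., and letting $K\to\infty$ and then $M\to\infty$ gives the result for both $\varphi(y)=y$ and $\varphi(y)=y^2$. The only delicate point is to separate cleanly the uniformity over the uncountable class of intervals (handled by reduction to a single one-dimensional Glivenko--Cantelli statement via the fact that intervals are differences of half-lines) from the unboundedness of $\varphi$ (handled by truncation and the existence of all moments of the Gaussian mixture).
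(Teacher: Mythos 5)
Your proof is correct, but it is organized differently from the paper's. The paper proves the lemma by a direct bracketing argument on the classes $\mathcal{F}_{a}=\{x\mapsto x^{a}\one_{I}(x)\}$, $a=0,1,2$: for each $\varepsilon>0$ it cuts $\bigR$ at finitely many points chosen so that each cell carries at most $\varepsilon/2$ of the measure $x^{a}g^{\star}(x)\,dx$, sandwiches every $x^{a}\one_{I}$ between two functions from the resulting finite family, and applies the ergodic theorem to those finitely many brackets; the weighting by $x^{a}$ in the cut-point construction handles the tails and the unboundedness of $x^{a}$ in one stroke, at the cost of a small complication for $a=1$ (the point $0$ must be inserted into the grid because $x\one_{I}(x)$ changes sign). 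You instead isolate a single uniform ingredient --- the ergodic Glivenko--Cantelli theorem for the empirical c.d.f., obtained from the pointwise ergodic theorem at rationals plus monotonicity and continuity of $F^{\star}$ --- and then reduce the $a=1,2$ cases to it by truncating at level $M$ and approximating $\varphi$ by a step function on $[-M,M]$, with the tail controlled by the ergodic theorem and the finiteness of the moments of the Gaussian mixture. Your route is slightly longer but modular: the uniformity over the uncountable class of intervals is dealt with exactly once, the sign of $\varphi$ causes no difficulty, and the three limits ($n\to\infty$, then $K\to\infty$, then $M\to\infty$) are taken over countable index sets so the a.s.\ bookkeeping is clean. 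The paper's route is shorter once one accepts the principle that a finite $L^{1}$-bracketing net yields a uniform ergodic law of large numbers. The only point you pass over silently is the distinction between open and closed endpoints (an interval is a difference of half-lines only up to endpoint conventions); since the $Y_{i}$ are a.s.\ pairwise distinct, the empirical mass of any single point is at most $1/n$, so this costs an extra $O(1/n)$ uniformly in $I$ and is harmless.
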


\begin{proof}
Let us note  $\mathcal{F}_{a} = \left\{ x \rightarrow x^{a} \one_{I}(x) \; : \; I \  \text{interval of} \; \bigR  \right\}$
for $a=0,1,2$. Since the sequence of random variables $(Y_{n})_{n\geq 0}$ is stationary and ergodic, it is enough to prove that, for $a=0,1,2$, for any positive $\epsilon$,
there exists a finite set of functions  $\tilde{\mathcal{F}}_{a} $ such that for any $f\in\mathcal{F}_{a}$, there exists $ l , u$ in  $\tilde{\mathcal{F}}_{a}$ such that
 $l \le f \le u$ and  $E(u(Y_{1}) - l(Y_{1})) \le \epsilon$.\\
For the cases a=0 or 2 and for any positive $\epsilon$, there exist real numbers :  $L^{1}_{a,\epsilon}$ and $L^{2}_{a,\epsilon}$ such that 
$\int_{-\infty}^{L^{1}_{a,\epsilon}} x^{a} g^{\star}(x) dx \le \epsilon$ and
$\int_{L^{2}_{a,\epsilon}}^{+\infty} x^{a} g^{\star}(x) dx \le \epsilon$,
and there exists real numbers $x_{a,1}=L^{1}_{\epsilon}<x_{a,2}<....<x_{a,N_{a,\epsilon}-2}<L^{2}_{a,\epsilon}=x_{a,N_{a,\epsilon}-1}$  such that 
$\int_{x_{a,i}}^{x_{a,i+1}} x^{a} g^{\star}(x) dx < \epsilon/2$, $i=1,\ldots,N_{a,\epsilon}-2$.
Then we define 
\begin{itemize}
\item
$ I^{1}_{N_{\epsilon}} = \bigR$,
\item
for any $i = 1,...,N_{a,\epsilon}$,  $I^{1}_{a,i} =  \left[ -\infty \ , \ x_{a,i}\right]$
\item
and for any $i = 1,...,N_{a,\epsilon}$,  $I^{2}_{a,i} =  \left[ x_{a,i}\ , \ \infty\right]$
\end{itemize}
so that if $\mathcal{I}_{a}$ is the set $\mathcal{I}_{a} = \left\{I_{a,i}^{j} |i =1,...,N_{a,\epsilon}, \ j=1,2\right\}  \bigcup \left\{ \left[x_{a,i_{1}}, x_{a,i_{2}} \right]\right\}_{i_{1}<i_{2}}$ 
the set $\tilde{\mathcal{F}}_{a}= \left\{ x^{a} \one_{I} | I \in \mathcal{I}_{a} \right\}  $ verifies the above conditions.\\
For the case $a=1$ the construction of the sequence $x_{a,1}=L^{1}_{\epsilon}<x_{a,2}<....<x_{N_{a,\epsilon }-2}<L^{2}_{a,\epsilon}=x_{N_{a,\epsilon}-1}$ is such that $\int_{x_{i}}^{x_{i+1}} |x|g^{\star}(x) dx < \epsilon/2$ is similar except that we introduce $0$ in the sequence : $x_{1:N_{a,\epsilon}}$.
\end{proof}

\section{Proof of Lemma \ref{lemmax}}
Let $t_{n} = 5 \sigma_{\star}^{2} \log n $. One has
\begin{eqnarray*}
\bigP_{\theta^{\star}}\big(|Y|^{2}_{(n)} \ge   t_{n}  \big) &\leq & \max_{x_{1:n}\in\bigX^{n}}\bigP_{\theta^{\star}}\left(|Y|^{2}_{(n)} \ge t_{n}\vert X_{1:n}=x_{1:n} \right)\\
 & = & \max_{x_{1:n}\in\bigX^{n}} \left\{ 1- \prod_{i=1}^{n}\bigP_{\theta^{\star}}\left(Y_{i}^{2}\le t_{n} \vert X_{i}=x_{i}  \right)\right\}\\
 & \le & 1-\left[\bigP\left( U^{2} \le \frac{t_{n} - M }{\sigma_{\star}} \right)\right]^{n}\\
\end{eqnarray*}
where $M = \max_{i=1,...,k} m_{i}^{\star}$ and $U$ is a Gaussian random variable with distribution $\mathcal{N}(0,1)$.
Then, for large enough $n$ : 
\[\bigP_{\theta^{\star}}\left(|Y|^{2}_{(n)} \ge t_{n}  \right)  \le \frac{1}{n^{3/2}} \]
and the result follows  from Borel Cantelli Lemma. \cite{Teicher}

\bibliographystyle{IEEEtran}
\bibliography{refvlhmm}

\end{document}